\synctex=1
\pdfoutput=1

\RequirePackage[bookmarksnumbered,unicode]{hyperref}

\documentclass[sigconf,screen,nonacm]{acmart}

\startPage{1}
\bibliographystyle{ACM-Reference-Format}
\citestyle{acmauthoryear}   %

\usepackage{main}

\usepackage{mathtools}
\usepackage{mathpartir}
\usepackage{tikz}
\usepackage{tikzit}
\usepackage{accents}
\usepackage{catchfilebetweentags}

\usepackage[only,llbracket,rrbracket]{stmaryrd}

\usepackage{caption}
\usepackage{subcaption}
\usepackage[clock]{ifsym}
\usepackage{stackengine}
\usepackage{widows-and-orphans}

\AtBeginDocument{
  \theoremstyle{acmplain}
  
  \theoremstyle{acmdefinition}
  
  \newtheorem{mainidea}{Main Idea}
}
\AtBeginEnvironment{example}{%
  \pushQED{\qed}%
}
\AtEndEnvironment{example}{\popQED\endexample}
\AtBeginEnvironment{mainidea}{%
  \pushQED{\qed}%
}
\AtEndEnvironment{mainidea}{\popQED\endmainidea}

\Crefname{axiom}{Axiom}{Axioms}
\Crefname{mainidea}{Main Idea}{Main Ideas}

\title{Decalf: A Directed, Effectful Cost-Aware Logical Framework}

\author{Harrison Grodin}
\orcid{0000-0002-0947-3520}

\affiliation{
  \institution{Carnegie Mellon University}
  \department{Computer Science Department}
  \streetaddress{5000 Forbes Ave.}
  \city{Pittsburgh}
  \state{PA}
  \postcode{15213}
  \country{USA}
}
\email{hgrodin@cs.cmu.edu}

\author{Yue Niu}
\orcid{0000-0003-4888-6042}

\affiliation{
  \institution{National Institute of Informatics}
  \city{Tokyo}
  \postcode{15213}
  \country{Japan}
}
\email{yue_niu@nii.ac.jp}

\author{Jonathan Sterling}
\orcid{0000-0002-0585-5564}

\affiliation{
  \institution{University of Cambridge}
  \department{Computer Laboratory}
  \streetaddress{15 JJ Thomson Avenue}
  \city{Cambridge}
  \postcode{CB3 0FD}
  \country{UK}
}
\email{js2878@cl.cam.ac.uk}

\author{Robert Harper}
\orcid{0000-0002-9400-2941}

\affiliation{
  \institution{Carnegie Mellon University}
  \department{Computer Science Department}
  \streetaddress{5000 Forbes Ave.}
  \city{Pittsburgh}
  \state{PA}
  \postcode{15213}
  \country{USA}
}
\email{rwh@cs.cmu.edu}

\authorsaddresses{}

\begin{document}

\begin{abstract}
  We present \decalf{}, a \emph{d}irected, \emph{e}ffectful \emph{c}ost-\emph{a}ware \emph{l}ogical \emph{f}ramework for studying quantitative aspects of functional programs with effects.
  Like \calf{}, the language is based on an internal \emph{phase distinction} between the \emph{behavior} of a program and its \emph{cost} measured by an effect.
  \decalf{} extends \calf{} by accommodating other \emph{effects}, such as probabilistic choice, which requires a reformulation of \calf{}'s approach to cost analysis: rather than rely on a separable notion of cost, here \emph{a cost bound is simply another program}.
  Formally, every type is equipped with an intrinsic preorder, allowing effectful programs to be compared for cost inequality.
  This approach serves as a streamlined alternative to the standard method of isolating a cost recurrence and readily extends to higher-order, effectful programs.

  The development proceeds by first introducing the \decalf{} type system, which is based on an intrinsic cost ordering among terms that restricts in the behavioral phase to extensional equality.
  This formulation is then applied to illustrative examples, including pure and effectful sorting algorithms.
  Finally, \decalf{} is semantically justified via a model in the topos of augmented simplicial sets.
\end{abstract}

\maketitle

\section{Introduction}\label{sec:intro}

The \calf{} language~\citep{niu-sterling-grodin-harper:2022} is a dependent type theory that consolidates the specification and verification of the \emph{behavior} and \emph{cost} of programs.
For example, in \calf{} it is not only possible to prove that insertion sort and merge sort are behaviorally equal, but also they perform at most $n^2$ and at most $n \lg n$ comparisons, respectively, on an input list of length $n$.

It may seem, at first, that the stated properties are incompatible---if the sorting functions are \emph{equal} as functions on lists, then how can they have \emph{different} cost properties?
Moreover, what does it mean, type-theoretically, for the two algorithms to perform the stated number of comparisons?
The key to understanding how these questions are handled by \calf{} lies in the combination of two recent developments in type theory:
\begin{enumerate}
    \item The view of cost counting as a computational effect, implemented equationally in a dependent variant of Levy's \emph{call-by-push-value}~\citep{levy:2003:book,kavvos-morehouse-licata-danner:2019}.
    \item The reformulation of \emph{phase distinctions}~\citep{harper-mitchell-moggi:1990} in terms of modalities, emanating from Sterling's Synthetic Tait Computability~\citep{sterling:2021:thesis,sterling-harper:2021}.
\end{enumerate}

\subsection{Call-By-Push-Value}

In call-by-push-value, \emph{value} types $X,Y,Z$, which classify pure data, are distinguished from \emph{computation} types $A,B,C$, which classify effectful programs.%
\footnote{Although not all models of call-by-push-value are of this form, it is instructive to think of computation types as being the algebras for a strong monad on the category of value types and pure functions.}
Following the formulation of \calf{}, these two classes of types are defined in the following signature in an extensional, higher-order, dependently typed \emph{logical framework} (LF):%
\begin{center}
  \begin{minipage}{0.4\columnwidth}
    \iblock{
      \mrow{\tpv,\tpc : \jdg}
      \mrow{\mathsf{val} : \tpv \to \jdg}
      \mrow{\mathsf{comp} : \tpc\to\jdg}
      \mrow{\tmc{A} \eqdef \tmv{\U{A}}}
    }
  \end{minipage}%
  \begin{minipage}{0.6\columnwidth}
    \iblock{
      \mrow{\mathsf{U} : \tpc \to \tpv}
      \mrow{\mathsf{F} : \tpv \to \tpc}
      \mrow{\mathsf{ret} : \tmv{X} \to \tmc{\F{X}}}
      \mrow{\mathsf{bind} : \tmc{\F{X}} \to {(\tmv{X} \to \tmc{A})} \to \tmc{A}}
    }
  \end{minipage}%
\end{center}
Value types $\isof{X}{\tpv}$ and computation types $\isof{A}{\tpc}$ each have a corresponding collection of terms, $\tmv{X}$ and $\tmc{A}$.
Following \citet{niu-sterling-grodin-harper:2022}, we define computations as $\tmc{A} \coloneqq \tmv{\U{A}}$, leading to a less bureaucratic version of call-by-push-value in which the suspension and forcing operations are invisible.
The two classes of types are linked by adjoint type formers: $\F{X}$ describes effectful computations that may return a value of type $X$, and $\U{A}$ describes suspensions of effectful computations of type $A$.
The \textsf{ret} and \textsf{bind} constructs return values and sequence computations.

\subsection{Compositional Cost Analysis in Calf}\label{sec:intro:calf}
The purpose of separating values from computations is to give a compositional account of cost.
In particular, \calf{} instruments code with its cost by means of a write-only \emph{cost counting} effect $\mstept{A}{c}{e}$ that annotates a computation $e : A$ with an additional $c : \costty$ units of cost, so determining the figure-of-merit for cost analysis.
(In the case of sorting, the comparison operation is instrumented with this effect.)
The value type $\costty$ is equipped with the structure of a monoid that is respected by the cost effect. %
\begin{center}
  \begin{minipage}{0.4\columnwidth}
    \iblock{
      \mrow{\costty : \tpv}
      \mrow{0 : \tmv{\costty}}
      \mrow{+ : \tmv{\costty} \to \tmv{\costty} \to \tmv{\costty}}
      \mrow{\_ : \mathsf{isMonoid}(\costty, 0, +)}
    }
  \end{minipage}%
  \begin{minipage}{0.6\columnwidth}
    \iblock{
      \mrow{\mathsf{charge} : \tmv{\costty} \to {\tmc{A} \to \tmc{A}}}
      \mrow{\_ : \mstep{0}{a} = a}
      \mrow{\_ : \mstep{c_1}{\mstep{c_2}{a}} = \mstep{c_1 + c_2}{a}}
    }
  \end{minipage}%
\end{center}

When the insertion sort algorithm $\textit{isort} : \listty{\nat} \rightharpoonup \F{\listty{\nat}}$ is appropriately annotated with this effect, its cost may be shown in \calf{} to be upper bounded by quadratic comparisons by proving
\begin{align}\label{eq:isort-eq}
  \textit{isort} = \lambda l.\ \mstep{r(l)}{\ret{\underline{\textit{sort}}~l}},
\end{align}
for some (pure) function $r : \listty{\nat} \to \costty$ with $r(l) \le \len{l}^2$ for all lists $l$, where the unique (pure) sorting function $\underline{\textit{sort}} : \listty{\nat} \to \listty{\nat}$ specifies the behavior of $\textit{isort}$.%
\footnote{In many examples, we implicitly assume and apply a monoid homomorphism $(\nat,0,+) \to (\costty,0,+)$, treating costs as numbers that are combined additively.}
The reliance on extracting a function $r$ reflects a long-standing tendency in the literature to isolate a ``mathematical'' characterization of the cost of executing an algorithm by a function---typically recursively defined, then called a \emph{recurrence}---that defines the cost incurred as a function of its input.

In this way, the \calf{} type theory is capable of expressing and verifying cost bounds on algorithms.
But now, when the cost counting effect is included, insertion sort and merge sort are \emph{not equal}, exactly because they have different costs. %
How might we speak of their behavioral equivalence?

\subsection{The Cost--Behavior Phase Distinction}\label{sec:intro:phase-distinction}
Here the second key idea comes into play, the introduction of a \emph{phase distinction} between cost and behavior, which is a proposition $\ExtOpn$ representing a \emph{switch} that collapses all cost information when activated/assumed.%
\footnote{The logic of \calf{} is \emph{intuitionistic}, so an indeterminate proposition need not be either true or false in a given scope.}
\begin{center}
  \begin{minipage}{0.5\columnwidth}
    \iblock{
      \mrow{\ExtOpn : \jdg}
      \mrow{\_ : \impl{\isof{p,p'}{\ExtOpn}} p = p'}
    }
  \end{minipage}%
  \begin{minipage}{0.5\columnwidth}
    \iblock{
      \mrow{\ext{\mathcal{J}} \eqdef \ExtOpn \to \mathcal{J}}
      \mrow{\_ : \ext{\tmv{\costty} \cong \unit}}
    }
  \end{minipage}%
\end{center}
When $\ExtOpn$ is true/inhabited, \eg by assumption in the context, then \emph{the cost counting effect is erased}, and hence the two sorting algorithms are deemed \emph{behaviorally equal}.
This is achieved by requiring that the cost model $\costty$ is \emph{algorithmic}%
, meaning that it is isomorphic to the unit type in the behavioral phase.
This means that in the behavioral phase, every $c : \costty$ is equal to $0$, and thus $\mstep{c}{e} = \mstep{0}{e} = e$.
The \emph{behavioral modality} $\Op$ governs \emph{behavioral} specifications in the sense that any type of the form $\ext{A}$ is oblivious to cost.
In general, programs, equations, and inequalities in \decalf{} take cost structure into account.
To consider only cost-ignoring behavioral properties, we study programs in the fragment of \decalf{} under the behavioral phase; for example, insertion sort and merge sort are behaviorally equal, $\ext{\textit{isort} = \textit{msort}}$.

\subsection{Compositional Cost Analysis with Effects}\label{sec:intro:cost-analysis}
To motivate the contributions of this paper, we turn our attention to effects beyond cost.
As a representative example, consider a probabilistic choice algebraic effect, axiomatized in \cref{fig:effect-probability}.~\citep{plotkin-power:2002}
The computation $\mflip{p}{a_0}{a_1}$ computes $a_1$ with probability $p$ and computes $a_0$ otherwise.

\begin{figure}
  \iblock{
    \mrow{\mathsf{flip} : \QQ \to \tmc{A} \to \tmc{A} \to \tmc{A}}
    \mrow{\_ : \mflip{0}{a_0}{a_1} = a_0}
    \mrow{\_ : \mflip{p}{a_0}{a_1} = \mflip{1-p}{a_1}{a_0}}
    \mrow{\_ : \mflip{p}{a}{a} = a}
    \mhang{
      \_ :
      \prn{p = 1 - ((1 - pq)(1 - r))}
    }{
      \mrow{
        \to
        \mflip{pq}{\mflip{r}{a_0}{a_1}}{a_2} = \mflip{p}{a_0}{\mflip{q}{a_1}{a_2}}
      }
    }
    \mrow{\_ : \mstep{c}{\mflip{p}{a_0}{a_1}} = \mflip{p}{\mstep{c}{a_0}}{\mstep{c}{a_1}}}
  }
  \caption{
    Specification of the $\mathsf{flip}$ primitive for finitary probabilistic choice, which describes a convex space.
  }
  \label{fig:effect-probability}
\end{figure}

Using this effect, the randomized quicksort algorithm~\citep{hoare:1961,hoare:1962} may be defined as $\textit{qsort} : \listty{\nat} \rightharpoonup \F{\listty{\nat}}$.
Verifying that the behavior of quicksort is equivalent to that of insertion sort, $\ext{\textit{isort} = \textit{qsort}}$, is straightforward.
How, though, might the cost of quicksort be upper bounded by, say, its worst-case quadratic number of comparisons?
While the method of recurrences is desirable when applicable, it cannot be expected in general that the cost can be so characterized, such as in the presence of effects.
There is no (pure) recurrence $r : \listty{\nat} \to \costty$ such that
\[ \textit{qsort} = \lambda l.\ \mstep{r(l)}{\ret{\underline{\textit{sort}}~l}}, \]
since the cost of running $\textit{qsort}$ on a list $l$ is, by design, random!

To address this issue, we propose to extend \calf{} with a notion of \emph{program-level cost inequality}, such that the proposition ``the worst-case cost of $\textit{qsort}$ is quadratic'' can be formally rendered as
\begin{align}\label{eq:qsort-leq}
  \textit{qsort} \mathrel{\Alert{\le}} \lambda l.\ \mstep{\len{l}^2}{\ret{\underline{\textit{sort}}~l}}
\end{align}
in dependent type theory, allowing effectful program $\textit{qsort}$ to be compared directly to an effectful program, treated as a cost bound.
The central principle that drives the present work is this:
\begin{center}
    \textit{What better way to define the cost of an effectful program \\ than by another effectful program?}
\end{center}
Using this principle, we provide a synthetic account of cost analysis for effectful programs that extends the established \calf{} methodology beyond the special case of pure functional programs.

\subsection{\decalf{}: a Directed, Effectful Cost-Aware Logical Framework}\label{sec:intro:decalf}

The key to achieving these goals is to reformulate the \calf{} type theory to consider \emph{inequalities}, as well as equalities, on programs.
Every type will come equipped with a preorder saying, intuitively, that $e \le e'$ when the cost of $e$ is upper bounded by the cost of $e'$, and the behavior of $e$ and $e'$ agree.
\begin{enumerate}
  \item Reflexivity means that it is always valid to say, in effect, that a piece of code ``costs what it costs.''  However, it of course is desirable to characterize the cost of a program more succinctly and informatively using, say, a closed form.
  \item Transitivity means that bounds on bounds may be consolidated, facilitating modular reasoning.
\end{enumerate}
Thus, \decalf{} is a \emph{directed} extension of \calf{}, where the ``directed'' aspect is manifested by the inequational judgments.
Inequalities, like equalities, account for both cost and behavior.
\begin{enumerate}
  \item Algorithmically (\ie, in general), inequality relaxes costs: if $\leVal{c}{c'}{\costty}$, then $\leCmp[A]{\mstep{c}{e}}{\mstep{c'}{e}}$.
  \item Behaviorally (\ie, assuming $\ExtOpn$), inequality is just equality: if $\leCmp[A]{e}{e'}$, then $\Op\prn{\eqCmp[A]{e}{e'}}$.
\end{enumerate}
Program inequality compares both the cost (for inequality) and the behavior (for equality) of a program simultaneously.
In general, this is essential: the \emph{cost} of later computations may depend on the \emph{behavior} of earlier ones.
Thus, we see that the cost--behavior phase distinction counterposes the \emph{interference} of behavior with cost to the \emph{noninterference} of cost with behavior \citep[\S 2.7]{niu-sterling-grodin-harper:2022}.

In the pure case, with cost as the only effect, the data of the original \calf{} methodology suffices to prove a program-level inequality, using the former condition to relax the true recurrence to the desired upper bound.
For example, to prove that
\[ \textit{isort} \le \lambda l.\ \mstep{\len{l}^2}{\ret{\underline{\textit{sort}}~l}}, \]
it suffices to combine \cref{eq:isort-eq} with reflexivity of the preorder and the proof that $r(l) \le \len{l}^2$.
In the presence of other effects, this inequality-based method of cost upper bounds generalizes that of \calf{}, such as in the case of $\textit{qsort}$: since all possible executions use at most $\len{l}^2$ cost, we will upper bound $\textit{qsort}$ as in \cref{eq:qsort-leq}.

\paragraph{Synopsis}
In \cref{sec:decalf-type-theory} we define the \decalf{} type theory;
in \cref{sec:examples} we formulate algorithms (some with effects) and derive their cost bounds;
and in \cref{sec:semantics} we justify \decalf{} topos-theoretically.

\subsection{Related Work}\label{sec:related-work}
As regards related work on cost analysis, the principal reference is \calf{}~\citep{niu-sterling-grodin-harper:2022}, on which the present work is based; \calf{} was itself inspired in part by prior works on denotational cost analysis via monads~\citep{danner-licata-ramyaa:2015} and call-by-push-value~\citep{kavvos-morehouse-licata-danner:2019}.
The \decalf{} type theory with its built-in inequality relation is closely related to the idea of \emph{directed type theory}~\citep{licata-harper:2011,riehl-shulman:2017}, which generalizes Martin-L\"of's identity types to account for directed identifications. Another important input to the design of \decalf{} is synthetic domain theory~\citep{hyland:1991,phoa:1991}, in which types are also equipped with an intrinsic preorder. Both of these inputs can be seen in our presheaf model of \decalf{}~(\cref{sec:semantics}), which resembles \emph{both} traditional models of higher category theory and directed type theory \emph{and} (pre)sheaf models of synthetic domain theory~\citep{fiore-rosolini:2001,fiore-rosolini:1997:cpos}. Our method to isolate presheaves that behave like preorders via orthogonality comes from \citet{fiore:1997}, whose ideas we have combined with the modern accounts of orthogonal reflection of \citet{rijke-shulman-spitters:2020} and \citet{christensen-opie-rijke-scoccola:2020}.

\section{The \decalf{} Type Theory}\label{sec:decalf-type-theory}

The \decalf{} type theory is a dependent extension of Levy's call-by-push-value framework~\citep{levy:2003:book} in which types are classified into \emph{value types} $X,Y,Z$ (such as suspensions $\U{A}$, $\nat$, and $\listty{X}$) and \emph{computation types} $A,B,C$ (such as $\F{X}$ and $X \rightharpoonup A$).
As with \calf{}, the \decalf{} type theory includes a \emph{phase distinction} between \emph{behavioral} and \emph{algorithmic} aspects of programs.
The hallmark of \decalf{} is inequality on programs, defined at all value types $X$:
\par\iblock{
  \mrow{{\leq} : \tmv{X} \to \tmv{X} \to \jdg}
  \mrow{\_ : \mathsf{isPreorder}(X, \leq)}
}\noindent
By the definition of $\tmc{A}$, computations of type $A$ are compared at value type $\U{A}$.
The approximation preorder on \decalf{} values and computations is induced by these principles:
\begin{enumerate}
  \item All functions are (automatically) monotone.
  \item Functions are compared pointwise.
  \item Under the behavioral phase, inequality implies equality.
\end{enumerate}
The behavioral requirement expresses that the approximation ordering is solely to do with cost: when cost effects are suppressed, the preorder is just equality, and thus has no effect on the behavior of the program.
We render these conditions formally in the LF:
\par\iblock{
  \mrow{\_ : (f:\tmv{X}\to\tmv{Y}) \to\leVal{x}{x'}{X} \to \leVal{f(x)}{f(x')}{Y}}
  \mrow{\_ : ((x:\tmv{X})\to \leCmp[Y(x)]{f(x)}{f'(x)}) \to \leVal{\textsf{lam}(f)}{\textsf{lam}(f')}{\Pi^+(X;Y)}}
  \mrow{\_ : \ext{\leVal{x}{x'}{X} \to \eqVal{x}{x'}{X}}}
}\noindent
We may also internalize this judgmental structure as a value type:
\par\iblock{
  \mrow{\mathsf{leq} : \tmv{X} \to \tmv{X} \to \tpv}
  \mrow{\_ : \tmv{\mathsf{leq}_X(x,x')} \cong \prn{\leVal{x}{x'}{X}}}
}\noindent
For readability, we will continue to discuss judgmental inequalities, although all stated cost bounds are provable internally.

\paragraph{Cost Ordering}
Since every type is intrinsically equipped with a preorder and all functions are guaranteed to be monotone, this automatically makes $(\costty, \le_{\costty}, +, 0)$ a preordered monoid.
Therefore, as a consequence of monotonicity, $\leCmp[A]{\mstept{A}{c}{e}}{\mstept{A}{c'}{e}}$ follows from $\leVal{c}{c'}{\costty}$ for any computation $\isof{e}{A}$.
This is the means by which cost bounds are relaxed in an analysis.
\section{Verification Examples}\label{sec:examples}

Equipped with equality and inequality of programs, we now provide examples of how a cost analysis may be performed.
In place of a cost bound, we simply use another program.
The purpose of cost analysis, then, will be to condense the details of a complex program.

In the forthcoming examples, we will instantiate the cost model to $(\omega,
\le_\omega, +, 0)$, a synthetic version of the natural numbers equipped with their standard inequality relation and additive monoid structure. Note that $\omega$ is not the inductive type of natural
numbers, which is discrete as a preorder. We note that $\omega$ is an algorithmic
type and implement it as a \emph{quotient
inductive type} \citep{kaposi-kovacs-altenkirch:2019} in
\cref{sec:cost-structure-semantics}.

\subsection{Pure Algorithms}

First, we discuss pure algorithms in which cost is the only available effect, providing an inequality-based presentation of the techniques of \calf{}~\citep{niu-sterling-grodin-harper:2022}.

\begin{figure}
  \iblock{
    \mrow{\textit{insert} : \nat \rightharpoonup \listty{\nat} \rightharpoonup \F{\listty{\nat}}}
    \mrow{\textit{insert}~x~\nilex = \ret{\consex{x}{\nilex}}}
    \mhang{\textit{insert}~x~\prn{\consex{y}{ys}} =}{
      \mrow{\bindex{\mstep{1}{x \le^? y}}{b}}
      \mhang{\boolif{b}{\ret{\consex{x}{\consex{y}{ys}}}}{}}{
        \mrow{\bindex{\textit{insert}~x~ys}{ys'}\ret{\consex{y}{ys'}}}
      }
    }
  }
  \smallskip
  \iblock{
    \mrow{\textit{isort} : \listty{\nat} \rightharpoonup \F{\listty{\nat}}}
    \mrow{\textit{isort}~\nilex = \ret{\nilex}}
    \mrow{\textit{isort}~\prn{\consex{x}{xs}} = \bindex{\textit{isort}~xs}{xs'}\textit{insert}~x~xs'}
  }
  \caption{%
    Insertion sort, with one cost per comparison. %
  }
  \label{code:isort}
\end{figure}

\begin{example}[Insertion Sort]
  Consider the implementation of $\textit{isort}$ in \cref{code:isort}.
  Here, we count the number of comparison operations performed (all in the subroutine $\textit{insert}$) by instrumenting them with the cost effect.
  The cost incurred by the computation $\textit{isort}~l$ depends on the particular elements of the list $l$.
  Thus, for the most precise upper bound of $\textit{isort}$, one could of course takes $\textit{isort}$ as its own upper bound, by reflexivity: $\textit{isort} \le \textit{isort}$.
  However, this bound provides more detail than a client may wish for.
  Rather than characterize this cost precisely, then, it is common to give only an upper bound.
  In the worst case, $\textit{isort}~l$ incurs no more than $\len{l}^2$ cost, $l$ is in reverse-sorted order.
  Thus, we may define
  \[ \bound{\textit{isort}} \coloneq \lambda l.\, \mstep{\len{l}^2}{\ret{\underline{\textit{sort}}~l}} \]
  as an upper bound for $\textit{isort}$, as described in \cref{sec:intro:decalf}.
  To prove that $\textit{isort} \le \bound{\textit{isort}}$, one can either
  \begin{enumerate}
    \item adapt the recurrence approach described in \cref{sec:intro:calf}; or
    \item prove this fact directly, using the fact that inequality of functions is pointwise, by induction on an input list $l$.
  \end{enumerate}
  In terms of cost, this verification shows that $\textit{isort}~l$ makes at most $\len{l}^2$ comparisons.
  Moreover, in terms of behavior, this inequality states that $\textit{isort}$ is a correct sorting algorithm: the equality
  \[ \ext{\textit{isort} = \lambda l.\, \ret{\underline{\textit{sort}}~l}} \]
  follows immediately from the cost-and-behavior inequality, using the fact that inequality is behaviorally just equality and cost is behaviorally erased.
  Thus, the proof that $\textit{isort} \le \bound{\textit{isort}}$ constitutes a proof of both the cost and correctness of $\textit{isort}$.

  Although it is less common than proving an upper bound, it is also possible to prove a lower bound on the cost of a computation.
  Since $\textit{insert}~x~l$ costs at least $1$ on a non-empty list $l$ and $\textit{isort}$ is length-preserving, $\textit{isort}~l$ incurs at least $\len{l} - 1$ cost:
  \[ \leCmp{\lambda l.\, \mstep{\len{l} - 1}{\ret{\underline{\textit{sort}}~l}}}{\textit{isort}}. \qedhere \]
\end{example}

Adapting the work of \citet{niu-sterling-grodin-harper:2022}, we may similarly define the merge sort algorithm, \textit{msort}, and prove that it is bounded by cost $\len{l}\lg\len{l}$ and behaviorally coheres with $\underline{\textit{sort}}$:
\[ \leCmp{\textit{msort}}{\lambda l.\, \mstep{\len{l}\lg\len{l}}{\ret{\underline{\textit{sort}}~l}}} \eqcolon \bound{\textit{msort}}. \]
Using the fact that program inequality is behaviorally equality, we may recover the proof that these two sorting algorithms are behaviorally equal from their respective cost-and-behavior inequalities.

\begin{theorem}
  In the behavioral phase, $\eqCmp{\textit{isort}}{\textit{msort}}$.
\end{theorem}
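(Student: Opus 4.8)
The plan is to reason entirely inside the extensional phase, assuming a witness $u : \ExtOpn$, and to show that both $\textit{isort}$ and $\textit{msort}$ collapse to the common specification $\lambda l.\, \ret{\textit{sort}_\text{spec}~l}$; the theorem then follows by transitivity of equality in the logical framework.

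First I would invoke the two cost bounds established just above, namely $\leCmp{\textit{isort}}{\lambda l.\, \mstep{\len{l}^2}{\ret{\textit{sort}_\text{spec}~l}}}$ and $\leCmp{\textit{msort}}{\lambda l.\, \mstep{\len{l}\lg\len{l}}{\ret{\textit{sort}_\text{spec}~l}}}$, observing that both are phrased with respect to the \emph{same} complex value $\textit{sort}_\text{spec}$. Since program inequality restricts to equality in the extensional phase---\ie{} $\leCmp[X]{e}{e'}$ entails $\ExtOpn \vdash \eqCmp[X]{e}{e'}$, which at the computation level follows from $\leq_{\textsf{ext}}$ via the encoding $\tmc{X} \eqdef \tmv{\U{X}}$---under $u : \ExtOpn$ we obtain $\eqCmp{\textit{isort}}{\lambda l.\, \mstep{\len{l}^2}{\ret{\textit{sort}_\text{spec}~l}}}$ and, similarly, $\eqCmp{\textit{msort}}{\lambda l.\, \mstep{\len{l}\lg\len{l}}{\ret{\textit{sort}_\text{spec}~l}}}$. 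Next, because $\mathbb{C}$ is purely intensional we have $\mstep{c}{e} = e$ for every cost $c$ under $\ExtOpn$; applying this pointwise and pushing it under the $\lambda$ using $\mathsf{lam}_{\mathsf{step}}$ (or, equivalently, by function extensionality from equality reflection) gives $\eqCmp{\lambda l.\, \mstep{\len{l}^2}{\ret{\textit{sort}_\text{spec}~l}}}{\lambda l.\, \ret{\textit{sort}_\text{spec}~l}}$ and $\eqCmp{\lambda l.\, \mstep{\len{l}\lg\len{l}}{\ret{\textit{sort}_\text{spec}~l}}}{\lambda l.\, \ret{\textit{sort}_\text{spec}~l}}$. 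Chaining these equalities with symmetry and transitivity yields $\eqCmp{\textit{isort}}{\textit{msort}}$ under $\ExtOpn$, as required; this is exactly the composite of the extensional-correctness corollaries $\open{\prn{\eqCmp{\textit{isort}}{\lambda l.\, \ret{\textit{sort}_\text{spec}~l}}}}$ and $\open{\prn{\eqCmp{\textit{msort}}{\lambda l.\, \ret{\textit{sort}_\text{spec}~l}}}}$ already recorded for the two algorithms.

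There is little genuine difficulty here: all the content is carried by the previously-established bounds, and the argument is essentially the observation that both algorithms are extensionally equal to $\textit{sort}_\text{spec}$. The only points requiring care are bookkeeping: (i) transporting $\leq_{\textsf{ext}}$ from the value level, where it is stated in \cref{fig:decalf}, to the computation level, which is immediate from $\tmc{X} \eqdef \tmv{\U{X}}$; and (ii) justifying that the $\mathsf{step}$-erasure equation, most naturally read at the computation level, may be applied underneath the outer $\lambda$, which is handled by $\mathsf{lam}_{\mathsf{step}}$ together with $\mathsf{step}_{0}$, or simply by pointwise reasoning under equality reflection. One should also confirm that the merge-sort bound is set up, as the surrounding text indicates, with the same $\textit{sort}_\text{spec}$ used for insertion sort, so that the two right-hand sides coincide verbatim after step-erasure.
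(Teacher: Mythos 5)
Your argument is exactly the paper's: under $\ExtOpn$ the inequalities $\leCmp{\textit{isort}}{\bound{\textit{isort}}}$ and $\leCmp{\textit{msort}}{\bound{\textit{msort}}}$ become equalities, the $\mathsf{step}$ operations are erased since $\mathbb{C}$ is purely intensional, and both sides collapse to $\lambda l.\, \ret{\textit{sort}_\text{spec}~l}$, whence the result by transitivity. Your version just spells out the bookkeeping (the value/computation-level transport of $\leq_{\textsf{ext}}$ and pushing step-erasure under the $\lambda$) that the paper leaves implicit; the proposal is correct.
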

\begin{proof}
  Behaviorally, the inequalities in the cost bounds are equalities and the cost operation is trivialized: \[ \textit{isort}
  = \bound{\textit{isort}}
  = \lambda l.\, \ret{\underline{\textit{sort}}~l}
  = \bound{\textit{msort}}
  = \textit{msort}. \qedhere \]
\end{proof}

\subsection{Effectful Algorithms}\label{sec:effectful-algorithms}

Writing cost bounds using program inequalities, we can extend \decalf{} with various computational effects, such as probabilistic choice~(\cref{fig:effect-probability}), and prove bounds on effectful programs.

\begin{figure}
  \iblock{
    \mrow{\textit{choose} : \prodty{\nat}{\listty{\nat}} \rightharpoonup \F{\prodty{\nat}{\listty{\nat}}}}
    \mrow{\textit{choose}~\prn{x,\nilex} = \ret{x, \nilex}}
    \mhang{\textit{choose}~\prn{x,\consex{y}{ys}} =}{
      \mhang{\mflipp{1 / (2 + \len{ys})}}{
        \mrow{({\bindex{\textit{choose}~\prn{y,ys}}{\prn{\pair{pivot}{l}}}\ret{\pair{pivot}{\consex{x}{l}}}},}
        \mrow{\phantom{(}{\ret{\pair{x}{\consex{y}{ys}}}})}
      }
    }
  }
  \smallskip
  \iblock{
    \mrow{\textit{partition} : \nat \rightharpoonup \listty{\nat} \rightharpoonup \F{\prodty{\listty{\nat}}{\listty{\nat}}}}
    \mrow{\textit{partition}~pivot~\nilex = \ret{\pair{\nilex}{\nilex}}}
    \mrow{\textit{partition}~pivot~\prn{\consex{x}{xs}} =}
    \mrow{\quad\bindex{\textit{partition}~pivot~xs}{\prn{\pair{xs_1}{xs_2}}}}
    \mrow{\quad\bindex{\mstep{1}{x \le^? pivot}}{b}}
    \mrow{\quad\boolif{b}{\ret{\pair{\consex{x}{xs_1}}{xs_2}}}{\ret{\pair{xs_1}{\consex{x}{xs_2}}}}}
  }
  \smallskip
  \iblock{
    \mrow{\textit{qsort} : \listty{\nat} \rightharpoonup \F{\listty{\nat}}}
    \mrow{\textit{qsort}~\nilex = \ret{\nilex}}
    \mrow{\textit{qsort}~\prn{\consex{x}{xs}} =}
    \mrow{\quad\bindex{\textit{choose}~\prn{x, xs}}{\prn{\pair{pivot}{l}}}}
    \mrow{\quad\bindex{\textit{partition}~pivot~l}{\prn{\pair{l_1}{l_2}}}}
    \mrow{\quad\bindex{\textit{qsort}~l_1}{l_1'}}
    \mrow{\quad\bindex{\textit{qsort}~l_2}{l_2'}}
    \mrow{\quad\ret{l_1' \mdoubleplus \consex{pivot}{l_2'}}}
  }
  \Description{Quicksort algorithm}
  \caption{
    Quicksort algorithm, where the \textit{choose} auxiliary function chooses a pivot uniformly randomly.
    As in \cref{code:isort}, the cost instrumentation tracks one unit of cost per comparison.
  }
  \label{code:qsort}
\end{figure}

\begin{example}[Quicksort]
  In \cref{code:qsort}, we define a variant of the quicksort algorithm \citep{hoare:1961,hoare:1962} in which the pivot is chosen uniformly randomly.
  The number of comparisons performed by $\textit{qsort}~l$ depends on which element is chosen as a pivot; in the worst case, it can perform quadratically many comparisons.
  This cost bound, as formulated in \cref{sec:intro:cost-analysis}, may be proved by induction:
  \[ \leCmp{\textit{qsort}}{\lambda l.\, \mstep{\len{l}^2}{\ret{\underline{\textit{sort}}~l}}}. \]
  The fact that $\textit{qsort}~l$ is randomized is not reflected in this bound: regardless of the chosen pivot, it always incurs at most $\len{l}^2$ cost and returns $\underline{\textit{sort}}~l$.
  In other words, the use of randomization was \emph{benign}: behaviorally, it is invisible, rendering the program effect-free.
  In particular, it is the case that $\ext{\textit{isort} = \textit{msort} = \textit{qsort}}$.
\end{example}

By including non-cost effects in the upper bounding program, it is also possible to analyze the effects used in a program (\eg, the specific distributional costs), which is of particular importance when analyzing programs that use effects in a non-benign manner.

\subsection{Higher-Order Functions}\label{sec:hof}

We now turn our attention to \emph{higher-order} functions that take suspended computations as input.

\begin{figure}
  \iblock{
    \mrow{\textit{map} : \U{X \rightharpoonup \F{Y}} \rightharpoonup \listty{X} \rightharpoonup \F{\listty{Y}}}
    \mrow{\textit{map}~f~\nilex = \ret{\nilex}}
    \mrow{\textit{map}~f~\prn{\consex{x}{xs}} =}
    \mrow{\quad\bindex{\textit{map}~f~xs}{ys}}
    \mrow{\quad\bindex{f~x}{y}}
    \mrow{\quad\ret{\consex{y}{ys}}}
  }
  \caption{
    Implementation of the \textit{map} function on lists, which applies a suspended function elementwise to an input list.
  }
  \Description{
    Implementation of the \textit{map} function on lists, which applies a suspended function elementwise to an input list.
    No cost is instrumented explicitly, but the applications of $f$ may incur cost (and/or other effects).
  }
  \label{code:map}
\end{figure}

\begin{example}[List map]\label{ex:list-map}
  Consider the list map function in \cref{code:map}.
  If nothing is known about the input $f$, then \textit{map} is the only reasonable bound for itself---to be sure, if $f$ made use of order-dependent effects, such as printing, the exact implementation of $\textit{map}$ cannot be simplified.
  However, if some properties about $f$ are known, it is possible to prove a more concise bound.
  \begin{enumerate}
    \item
      Suppose $f$ is known to be constant-cost, meaning that there exists a pure $\underline{f} : X \to Y$ and $c : \costty$ such that for all $x : X$,
      \[ f~x \le \mstep{c}{\ret{\underline{f}~x}}. \]
      Then it is the case that $\textit{map}~f$ is linear-cost,
      \[ \textit{map}~f \le \lambda l.\ \mstep{c\len{l}}{\ret{\underline{\textit{map}}~\underline{f}~l}}, \]
      where $\underline{map}$ is the pure list map function.

    \item
      Alternatively, suppose $f$ is known to have cost bounded by a Bernoulli distribution,
      \[ f~x \le \mflip{p}{\ret{\underline{f}~x}}{\mstep{c}{\ret{\underline{f}~x}}}. \]
      Then $\textit{map}~f$ has cost bounded by a $\len{l}$-trial binomial distribution, using $\textit{binomial} : \nat \rightharpoonup \F{\unit}$ (defined by induction):
      \[ \textit{map}~f \le \prn{\lambda l.\ \bindex{\textit{binomial}~\len{l}}{\_}{\ret{\underline{\textit{map}}~\underline{f}~l}}}. \qedhere \]
  \end{enumerate}
\end{example}

This style of reasoning aligns well with existing informal techniques.
If details about an input computation are known, then a concise and insightful bound can be derived.
Otherwise, one must examine the program in its entirety to understand the behavior.
\DeclareFontFamily{U}{min}{}
\DeclareFontShape{U}{min}{m}{n}{<-> udmj30}{}
\newcommand\yo{\!\text{\usefont{U}{min}{m}{n}\symbol{'210}}\!}

\NewDocumentCommand\SPX{}{\mathbf{\Delta}}
\NewDocumentCommand\AugSPX{}{\mathbf{\Delta}_\bot}
\NewDocumentCommand\OpCat{m}{#1\Sup{\mathsf{op}}}
\NewDocumentCommand\Yo{o}{\yo\IfValueT{#1}{\Sub{#1}}}
\NewDocumentCommand\CAT{}{\mathbf{Cat}}
\NewDocumentCommand\PREORD{}{\mathbf{Preord}}
\NewDocumentCommand\II{}{\mathbb{I}}
\NewDocumentCommand\RxP{}{\mathsf{P}}
\NewDocumentCommand\pathrel{m}{\sqsubseteq\Sub{#1}}
\NewDocumentCommand\specrel{m}{\preceq\Sub{#1}}
\NewDocumentCommand\irel{}{\le\Sub{\II}}
\NewDocumentCommand\Hom{mmm}{\hom\Sub{#1}\prn{#2,#3}}

\NewDocumentCommand\UU{}{\mathcal{U}}
\NewDocumentCommand\VV{}{\UU^{+}}

\section{A Model of \decalf{}}\label{sec:semantics}

Our goal is to construct a model of type theory that contains a nontrivial interpretation of the constructs of \decalf{}: this must necessarily contain a universe of types equipped with a built-in preorder structure, as well as a phase distinction proposition --- such that in the behavioral phase, the inequality relations collapse to equalities.
This will be achieved by first compiling \decalf{} to extensional dependent type theory with some additional axioms and then providing a model that validates these axioms.

\begin{definition}[Directed interval]
  A \emph{directed interval} is a type $\II$ equipped with two constants $0,1 : \II$ and a preorder $\irel$ such that for all $i : \II$, $0 \irel i$ and $i \irel 1$.%
  \footnote{We do \emph{not} assume here that ${0\not=1}$, as is done by \citet{riehl-shulman:2017}.}
\end{definition}

\begin{mainidea}[Paths for automatic monotonicity]\label{idea:interval}
  The first problem to solve when building a model of \decalf{} is to devise a binary relation $\prn{\pathrel{X}}\subseteq X\times X$ on every type $X$ such that any function $X \to Y$ is \emph{automatically} monotone. The solution to this problem, first discovered in the world of synthetic domain theory, is to define $\prn{\pathrel{X}}$ uniformly in $X$ by considering functions into $X$ from a directed interval.
  Such an interval always induces a \emph{path relation}
  \[ x \pathrel{X} x' \eqdef{} \exists \alpha \colon \II\to X.\ (\alpha0 = x) \land (\alpha1 = x'), \]
  that is automatically preserved by any function $f\colon X \to Y$.
  To prove this, suppose $x\pathrel{X} x'$, and we will show that $fx\pathrel{Y} fx'$.
  By definition, we may assume some path $\alpha\colon \II\to X$ such that $\alpha0 = x$ and $\alpha1=x'$; then the composite $\beta \eqdef f\circ \alpha \colon\II\to Y$ satisfies $\beta0 = fx$ and $\beta1 = fx'$, and so $fx\pathrel{Y} fx'$.
\end{mainidea}

Although the idea of an interval lets us define a reflexive binary relation on every type $X$, this relation does not enjoy almost any of the properties that we need in order to model \decalf{}:

\begin{enumerate}
  \item \emph{Behavioral discreteness:} Assuming $\ExtOpn$, it need not be the case that $x \pathrel{X} x'$ implies $x = x'$.
  \item \emph{Path transitivity:} It need not be the case that $\prn{\pathrel{X}}$ is transitive, \ie exhibit $X$ as a preorder.
  \item \emph{Pointwise order:} It need not be the case that functions are ordered pointwise, \ie we do not necessarily have ${f \pathrel{X\to Y} f'}$ if and only if $\forall x : X.\ fx\pathrel{Y} f' x$.
\end{enumerate}

We can solve the first problem by defining the behavioral phase proposition $\ExtOpn$ in terms of the interval type.

\begin{mainidea}[Behavioral phase by equality of endpoints]\label{idea:behavioral-discreteness}
  We can force $x\pathrel{X} x'$ to imply $x = x'$ in the behavioral phase by defining $\ExtOpn$ to be the proposition $0 =_\II 1$.
  To see that this condition suffices, assume the behavioral phase holds, \ie $0 =_\II 1$.
  Because a proof $x \pathrel{X} x'$ consists of a map $\alpha : \II \to X$ with $\alpha0 = x$ and $\alpha1 = x'$, we have $x = \alpha0 = \alpha1 = x'$ as desired.
\end{mainidea}

In order to achieve transitivity and pointwise comparison of functions, we restrict our attention to a class of types, isolated in a \emph{reflective subuniverse}~\citep{rijke-shulman-spitters:2020}, that have these properties.

\begin{mainidea}[Transitivity and pointwise functions by orthogonality]\label{idea:transitivity-by-orthogonality}
  We can isolate the subuniverse of types $\UU_\RxP \subseteq \UU$ for which the path relation is transitive and pointwise on functions, where the corresponding reflection operator $\RxP : \UU \to \UU_\RxP$ is constructed as a quotient inductive type. The reflection operator is left adjoint to the inclusion $\UU_\RxP\hookrightarrow\UU$  and therefore restricts to an idempotent monad on $\UU$.
\end{mainidea}

Taking stock, what exactly do we need to do in order to construct a model of \decalf{} along the lines of \cref{idea:interval,idea:behavioral-discreteness,idea:transitivity-by-orthogonality}?
In extensional dependent type theory with quotient inductive types, it will be sufficient to assume a directed interval $(\II, \irel, 0, 1)$.
From this data, the rest follows by the construction of a reflective subuniverse $\UU_\RxP$ from general considerations about the interval (\cref{sec:semantics:synpreord,sec:algebra-models}).
We will now develop these ideas further, culminating in the construction of a topos-theoretic model of \decalf{} supporting a sound and complete embedding from the category of preorders (\cref{sec:semantics:sound-complete,sec:model-aug}).

\subsection{Synthetic Preorders from an Interval}\label{sec:semantics:synpreord}

As in \cref{idea:transitivity-by-orthogonality}, we will construct a reflective subuniverse $\UU_\RxP$. This will be achieved using the notion of \emph{orthogonality}.

\begin{definition}[Orthogonality and suborthogonality]\label{idea:orthogonality}
  Let $X$ be a type, and let $f \colon U \to V$ be a function; the concept of orthogonality (resp. suborthogonality) is one way to make precise the idea that $X$ behaves ``as if'' the map $f$ were an isomorphism (resp. an epimorphism).
  We say that $X$ is \emph{orthogonal} (resp. \emph{suborthogonal}) to $f$ when the precomposition $\prn{-\circ f} \colon \prn{V\to X}\to \prn{U\to X}$ is required to be an isomorphism (resp. a monomorphism).
\end{definition}

It happens that the class of types orthogonal to a finite set of maps can be assembled into a reflective subuniverse $\UU_\RxP$ with a modal operator $\RxP$, assuming sufficiently powerful quotient and inductive types~\citep[\S 2]{rijke-shulman-spitters:2020}.
For example, a type is \emph{proposition} (subsingleton) iff it is orthogonal to the unique map ${{*} \colon \mathbf{2} \to \unit}$; the induced reflective subuniverse is the universe of mere propositions, and the induced modal operator is propositional truncation.
Therefore, in order to obtain the reflective subuniverse $\UU_\RxP$, it suffices to find orthogonality conditions that imply path transitivity (\cref{sec:semantics:spi:transitivity}) and pointwise ordering of functions (\cref{sec:semantics:spi:boundary-separation}).

\subsubsection{Path-Transitivity}\label{sec:semantics:spi:transitivity}
First, we describe a sufficient condition on a type $X$ that causes the path relation $\pathrel{X}$ to be transitive, adapting the following constructions from \citet{fiore-rosolini:1997:cpos}.

\NewDocumentCommand\In{o}{\mathsf{inj}\IfValueT{#1}{\Sub{#1}}}
\NewDocumentCommand\Prj{o}{\mathsf{proj}\IfValueT{#1}{\Sub{#1}}}

\begin{definition}\label{def:path-transitivity}
  A type $X$ is \emph{path-transitive} when it is orthogonal to the inclusion $\tau\colon \II\lor\II\hookrightarrow\II_2$, where $\II_2 = \Compr{(i,j)}{i\irel j}$:
  \[
    \begin{tikzpicture}[diagram]
      \SpliceDiagramSquare<sq/>{
        nw = \unit,
        ne = \II,
        sw = \II,
        se = \II\lor\II,
        north = 1,
        west = 0,
        se/style = pushout,
        south/node/style = upright desc,
        east/node/style = upright desc,
        south = \In[1],
        east = \In[0],
        north/style = embedding,
        west/style = embedding,
        east/style = embedding,
        south/style = embedding,
        height = 1.5cm,
      }
      \node[right = 2.5cm of sq/se] (I2) {$\II_2$};
      \draw[->,bend right=30] (sq/sw) to node[sloped,below] {$\lambda i. \prn{i,1}$} (I2);
      \draw[->,bend left=30] (sq/ne) to node[sloped,above] {$\lambda i. \prn{0,i}$} (I2);
      \draw[->,exists] (sq/se) to node[desc] {$\tau$} (I2);
    \end{tikzpicture}
  \]
\end{definition}

\begin{restatable}{lemma}{LemPathTransitivity}\label{lem:path-transitivity-gives-preorder}
  If a type $X$ is path-transitive, then $\pathrel{X}$ is transitive.
\end{restatable}
\begin{proof}
  Suppose that $x \pathrel{X} x'$ and $x' \pathrel{X} x''$ are witnessed by paths $\alpha \colon \II\to X$ and $\beta \colon \II\to X$, respectively, noting that ${\alpha0 = x}$, ${\alpha1 = \beta0 = x'}$, and ${\beta1 = x''}$.
  We wish to construct a proof ${x \pathrel{X} x''}$, \ie a path $\phi : \II \to X$ such that $\phi 0 = x$ and $\phi 1 = x''$. First, consider the map $\gamma \eqdef \brk{\alpha \mid \beta} \colon \II\lor\II\to X$, using the universal property of the pushout.
  Because $X$ is orthogonal to $\tau\colon \II\lor\II\hookrightarrow \II_2$,
  we have a unique lift $\hat{\gamma} \colon \II_2 \to X$.
  We define $\phi i \eqdef \hat{\gamma}\prn{i,i}$
  and compute:
  \begin{align*}
    \phi 0 & = \hat{\gamma}\prn{0,0} = \gamma(\In[0]~0) = \alpha 0 = x         \\
    \phi 1 & = \hat{\gamma}\prn{1,1} = \gamma(\In[1]~1) = \beta 1  = x'' \qedhere
  \end{align*}
\end{proof}

\subsubsection{Boundary Separation}\label{sec:semantics:spi:boundary-separation}

As it stands, there could be two distinct paths $\II\to X$ that have the same endpoints. We wish to isolate the types $X$ for which paths are uniquely determined by their endpoints; this property was dubbed \emph{boundary separation} by \citet{sterling-angiuli-gratzer:2019,sterling-angiuli-gratzer:2022} and $\Sigma$-separation by \citet{fiore-rosolini:2001}.

\begin{definition}
  A type $X$ is \emph{boundary separated} when it is \emph{suborthogonal} to the boundary inclusion $\brk{0 \mid 1} \colon \mathbf{2}\to \II$, meaning that any path $\alpha \colon \II \to X$ is determined by its endpoints $\alpha(0)$ and $\alpha(1)$.
\end{definition}

When a type $X$ is boundary separated, its path relation $x \pathrel{X} x'$ (defined in terms of mere existence) is equivalent to a dependent sum over a path, which provides the evidence necessary to prove that functions are compared pointwise.

\begin{lemma}\label{lem:bsep-sigma}
  If a type $X$ is boundary separated, then for all $x,x' \colon X$, \[ (x \pathrel{X} x') \cong \sigJ{\II \to X}[\alpha]{(\alpha 0 = x) \land (\alpha 1 = x')}. \]
\end{lemma}

\begin{restatable}{lemma}{LemBSepImpliesPointwise}\label{lem:bsep-implies-pointwise}
  Let $x:X\vdash Y x$ be a family of boundary separated types, and let $f,f' : \prn{x:X} \to Y x$ be a pair of dependent functions. Then $f \pathrel{\prn{x:X}\to Y x} f'$ iff for all $x:X$ we have ${fx \pathrel{Y x} f'x}$.
\end{restatable}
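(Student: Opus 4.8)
The plan is to prove the two implications separately; the forward direction is immediate and does not use boundary separation, while the backward direction is where the hypothesis does its work. For the forward direction, assume $f \pathrel{\prn{x:A}\to Bx} g$, so by definition of $\pathrel{}$ there is a path $H : \II \to \prn{\prn{x:A}\to Bx}$ with $H\,0 = f$ and $H\,1 = g$. Fixing $x:A$, the function $\lambda i.\, H\,i\,x : \II \to Bx$ is a path from $\prn{H\,0}\,x = fx$ to $\prn{H\,1}\,x = gx$, so $fx \pathrel{Bx} gx$. As this conclusion is a proposition (a subobject of $Bx\times Bx$), the existential over $H$ may be discharged, giving $fx \pathrel{Bx} gx$ for every $x:A$.

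For the backward direction, assume $fx \pathrel{Bx} gx$ for every $x:A$, i.e.\ for each $x$ there exists a path in $Bx$ from $fx$ to $gx$. The key point is that, since $Bx$ is boundary separated, the precomposition map $Bx^i : Bx^\II \to Bx^{\mathbf{2}} \cong Bx \times Bx$ along the boundary inclusion $i:\mathbf{2}\to\II$ is a monomorphism; hence its fibre over $\prn{fx,gx}$ — the type of paths in $Bx$ with those prescribed endpoints — is a subsingleton. Consequently the (truncated) existential statement $fx \pathrel{Bx} gx$ is equivalent to the genuine existence of such a path, and the hypothesis therefore supplies an actual dependent function $x \mapsto H_x$ with $H_x : \II \to Bx$ a path from $fx$ to $gx$ — this is just the principle of unique choice (function comprehension), valid in any topos. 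Currying, define $H : \II \to \prn{\prn{x:A}\to Bx}$ by $H\,i\,x = H_x\,i$. By function extensionality $H\,0 = \lambda x.\, fx = f$ and $H\,1 = \lambda x.\, gx = g$, so $H$ exhibits $f \pathrel{\prn{x:A}\to Bx} g$.

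The main obstacle is exactly the step in the backward direction passing from ``for each $x$ there exists a path'' to ``there is a choice of path, uniformly in $x$'': without a hypothesis on $B$ this would demand a choice principle, but boundary separation collapses each path space to a proposition, rendering the relevant instance of choice vacuous. Everything else — the construction of $H$ from the $H_x$ via the universal property of the exponential $\II\to(-)$, and the identification of its endpoints by function extensionality in the internal language of $\ECat$ — is routine.
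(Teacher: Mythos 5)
Your proof is correct and follows essentially the same route as the paper: the forward direction is the trivial evaluation of a path at each point, and the backward direction uses boundary separation to make each fibre of path spaces over fixed endpoints a subsingleton, so that the topos-valid principle of unique choice yields a uniform family of paths, which is then curried into a path in the dependent function type. No gaps.
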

\begin{proof}
  The forward direction is trivial, using automatic monotonicity at application of $x$.
  In the backwards direction, suppose that we have $\piJ{X}[x]{fx\pathrel{Yx}f'x}$, witnessed via \cref{lem:bsep-sigma} by a family of paths $\alpha \colon \piJ{X}[x]{\II \to Yx}$.
  To obtain a path from $f$ to $f'$, it suffices to construct a path $\beta \colon \II \to \piJ{X}[x]{Yx}$ with boundaries $f$ and $f'$, which may be defined as $\beta~i~x \eqdef \alpha~x~i$.
\end{proof}

Although the presentation of boundary separation in terms of suborthogonality is simple and elegant, it will be advantageous to observe that boundary separation can also be seen as an orthogonality property so as to incorporate it into a reflective subuniverse. To that end, we introduce \emph{path suspensions} below in order to state \cref{lem:bsep-as-orthogonality}, which characterizes boundary separation in terms of orthogonality.

\begin{definition}
  We define the \emph{path suspension} of a type $X$ to be following pushout:
  \[
    \DiagramSquare{
      nw = X\times \mathbf{2},
      ne = X\times \II,
      sw = \mathbf{2},
      se = \mathbb{S}X,
      se/style = pushout,
      east/style = {->,exists},
      south/style=  {->,exists},
      west = \Prj[2],
      north = X\times\brk{0\mid 1},
      width = 2.85cm,
      height = 1.25cm,
    }
  \]
\end{definition}

\begin{restatable}{lemma}{LemBSepAsOrthogonality}\label{lem:bsep-as-orthogonality}
  A type $X$ is boundary separated iff it is orthogonal to the path suspension $\mathbb{S}{*}\colon \mathbb{S} \mathbf{2} \to \mathbb{S}\unit$ of the unique map ${*} \colon \mathbf{2}\to\unit$.
\end{restatable}
\begin{proof}
  A map $\mathbb{S}\mathbf{2}\to X$ is a pair of paths between two fixed elements of $X$; a map $\mathbb{S}\unit \cong \II \to X$ is a single path between two fixed elements. Thus, orthogonality to the suspension $\mathbb{S}\mathbf{2}\to\mathbb{S}\unit$ means precisely that any two paths between the same elements are equal, which is precisely suborthogonality to $\brk{0 \mid 1} \colon \mathbf{2}\to \II$.
\end{proof}

\subsubsection{Synthetic Preorders}

We now come to a suitable definition of ``synthetic preorder'' using these two notions.

\begin{definition}
  A type $X$ is called a \emph{synthetic preorder} when it is both path-transitive and boundary separated, \ie orthogonal to both $\tau \colon \II\lor\II\hookrightarrow\II_2$ and $\mathbb{S}{*}\colon\mathbb{S}\mathbf{2}\to\mathbb{S}\unit$.
\end{definition}

The benefit of defining synthetic preorders in terms of orthogonality is that they automatically form a \emph{reflective subuniverse} of types, $\UU_\RxP$, which is equipped with a \emph{localization} modal operator $\RxP$~\citep{rijke-shulman-spitters:2020} that computes a ``best approximation'' (\ie, a \emph{reflection}) of a type $X$ by a synthetic preorder $\RxP{X}$.

\begin{definition}[Reflective subuniverse of synthetic preorders] %
  Define $\UU_\RxP$ to be the universe of synthetic preorders.
  The associated modal operator ${\RxP : \UU \to \UU_\RxP}$ can be constructed by means of a quotient inductive type that localizes at $\tau \colon \II\lor\II\hookrightarrow\II_2$ and $\mathbb{S}{*}\colon\mathbb{S}\mathbf{2}\to\mathbb{S}\unit$.
\end{definition}

This construction further implies that the subuniverse $\UU_\RxP$ is cartesian closed, and moreover closed under dependent functions spaces for families of synthetic preorders.
It is also complete and cocomplete, with limits computed as in the ambient universe and colimits computed by applying the reflection operator $\RxP$ to the those of the ambient universe~\citep{rijke:2019,rijke-shulman-spitters:2020}.

\paragraph{Discrete Types}

We introduce a notion of \emph{discrete} types that provides a sufficient condition for being a synthetic preorder.

\begin{restatable}{lemma}{LemDiscreteAreSynthPreorder}\label{lem:discrete}
  A type $X$ is called \emph{discrete} when it is orthogonal to the unique map $\II \to \unit$.
  Any discrete type is a synthetic preorder.
\end{restatable}

Synthetic preorders are not closed under all dependent sum types; dependent sum types can be formed, however, when the indexing type is discrete.
\begin{restatable}{lemma}{LemDiscretelyIndexedSums}\label{lem:discretely-indexed-sums}
  Let $X$ be a discrete type, and let $x:X\vdash Yx$ be a family of synthetic preorders. Then $\sigJ{X}[x]{Y x}$ is a synthetic preorder.
\end{restatable}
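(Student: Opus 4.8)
The goal is to verify that $Y \eqdef \prn{x:A}\times Bx$ is a synthetic preorder, \ie that it is orthogonal both to $\tau : \II\lor\II\to\II_2$ (path-transitivity) and to $\mathbb{S}{*} : \mathbb{S}\mathbf{2}\to\mathbb{S}\mathbf{1}$ (boundary separation). The plan is to reduce each of these orthogonality conditions \emph{fiberwise} over $A$, using discreteness of $A$ to see that the first components of the relevant maps are forced to be constant.

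The single consequence of discreteness that I will use is that every path $\alpha : \II\to A$ is constant: from $\alpha$ we get $\alpha 0 \pathrel{A} \alpha 1$, hence $\alpha 0 = \alpha 1$ since $\pathrel{A}$ is the diagonal, and then boundary separation of $A$ forces $\alpha$ to agree with the constant path at $\alpha 0$. It follows that, for any figure assembled from copies of $\II$ glued along endpoints, a map from that figure into $Y$ has constant $A$-component, so that the corresponding mapping space splits as a dependent sum over $A$ of the fiberwise mapping spaces. Concretely: a map $\II\to Y$ is a path in $A$ (necessarily constant, say at $a$) together with a path in $Ba$, so $Y^\II \cong \prn{a:A}\times\prn{Ba}^\II$; and since a composable pair of paths, or a pair of paths with common endpoints, shares its constant $A$-component, the universal property of the path suspension gives likewise $Y^{\mathbb{S}\mathbf{1}}\cong\prn{a:A}\times\prn{Ba}^{\mathbb{S}\mathbf{1}}$, $Y^{\mathbb{S}\mathbf{2}}\cong\prn{a:A}\times\prn{Ba}^{\mathbb{S}\mathbf{2}}$, and $Y^{\II\lor\II}\cong\prn{a:A}\times\prn{Ba}^{\II\lor\II}$, all naturally.

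Granting the analogous splitting $Y^{\II_2}\cong\prn{a:A}\times\prn{Ba}^{\II_2}$ (discussed below), the rest is immediate. Under these isomorphisms the comparison map $Y^{\mathbb{S}{*}} : Y^{\mathbb{S}\mathbf{1}}\to Y^{\mathbb{S}\mathbf{2}}$ is the dependent sum over $a:A$ of the comparison maps $\prn{Ba}^{\mathbb{S}\mathbf{1}}\to\prn{Ba}^{\mathbb{S}\mathbf{2}}$, each of which is an isomorphism because $Ba$ is a synthetic preorder and hence boundary separated; so $Y^{\mathbb{S}{*}}$ is an isomorphism and $Y$ is boundary separated. Likewise $Y^\tau : Y^{\II_2}\to Y^{\II\lor\II}$ is the dependent sum over $a:A$ of the comparison maps $\prn{Ba}^{\II_2}\to\prn{Ba}^{\II\lor\II}$, each an isomorphism because $Ba$ is path-transitive; so $Y^\tau$ is an isomorphism and $Y$ is path-transitive. (Being isomorphisms, these maps encode both the existence and uniqueness parts of orthogonality.)

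The only step that is not bookkeeping is the splitting of $Y^{\II_2}$, \ie the claim that every map $\II_2\to A$ is constant, equivalently that a discrete type is orthogonal to $\II_2\to\mathbf{1}$. For the figures $\II$, $\mathbb{S}\mathbf{1}$, $\II\lor\II$, and $\mathbb{S}\mathbf{2}$ the corresponding fact is elementary — it follows from $A^\II\cong A$ alone, since each of these figures collapses onto $\II$ or onto a point using only the endpoint structure. The chain $\II_2$, however, is where the \emph{Phoa principle} is needed: under Phoa one has $\II^\II\cong\II_2$ together with enough order structure on $\II$ to construct, for each $w:\II_2$, a path in $\II_2$ from the least element $(0,0)$ to $w$; thus $\II_2$ is path-connected, and any map from it into the discrete type $A$, whose only paths are constant, is itself constant. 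I expect this connectivity of $\II_2$ to be the main obstacle — everything else is the fiberwise reduction above — and it is presumably already isolated in the earlier analysis of discrete types under the Phoa principle.
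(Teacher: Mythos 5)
Your proof is correct and follows the same basic strategy as the paper's: use discreteness of $A$ to force the first component of the relevant maps to be constant, thereby reducing both orthogonality conditions to the fibers $Bx$, where they hold by hypothesis. The paper phrases path-transitivity as a lifting problem against $\tau : \II\lor\II\to\II_2$, observes that $\pi_1\circ\psi$ is constant at some $a:A$, and solves the resulting problem in $Ba$; your mapping-space formulation $Y^{(-)}\cong\prn{a:A}\times\prn{Ba}^{(-)}$ is the same reduction written as an isomorphism of hom-objects rather than as a lift. Where you genuinely diverge is on the figure $\II_2$: you correctly note that, unlike $\II$, $\II\lor\II$, $\mathbb{S}\mathbf{1}$, and $\mathbb{S}\mathbf{2}$, the chain object $\II_2$ is not assembled by gluing intervals along endpoints, so the constancy of maps $\II_2\to A$ is a separate claim, and you derive it from the Phoa principle via path-connectivity of $\II_2$ (using the meet structure on $\II$ that Phoa provides to form the path $t\mapsto\prn{i\wedge t, j\wedge t}$). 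This step is exactly what is needed for the \emph{uniqueness} half of orthogonality, \ie the injectivity of $Y^\tau$: the paper's proof only exhibits a lift with constant first component and leaves implicit why \emph{every} lift must have this form, so your extra care addresses a real elision rather than duplicating work. The one cost is that your argument assumes the Phoa principle, which the lemma as stated (and its Phoa-free consequence, \cref{lem:discrete-types-are-path-transitive}) does not; since Phoa is \cref{ax:phoa-principle} of the ambient theory and holds in the intended simplicial and augmented simplicial models, nothing downstream is affected, but to obtain the lemma in the generality claimed one would need a Phoa-free reason why discrete types are orthogonal to $\II_2\to\mathbf{1}$ (or to $\tau$ directly). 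Your treatment of boundary separation --- which the paper dismisses as obvious --- is complete and does not require Phoa.
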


\subsection{Modeling Computational Effects}\label{sec:algebra-models}

In this section, we describe how to instantiate the constructs of \decalf{} using synthetic preorders.
Assuming the extensional dependent type theory at hand has a pair of universes $\UU\in\VV$; judgments of \decalf{} are interpreted in the outer universe $\VV$. We will write $\UU_{\RxP}\subseteq\UU$ for the subuniverse of $\UU$ spanned by synthetic preorders.

\subsubsection{Value Types}
We interpret the universe of value types $\tpv$ as $\UU_\RxP$ itself, letting the decoding function $\tmv{X} \colon \tpv\to\jdg$ be the image of $X:\UU_\RxP$ under the inclusion $\UU_\RxP\hookrightarrow\VV$, which we shall leave implicit in our informal notation.

\subsubsection{Cost Structure}\label{sec:cost-structure-semantics}

The theory of \decalf{} is parameterized by a cost monoid $\costty : \tpv$ that is algorithmic, \ie becomes a singleton when $\ExtOpn$ is true.
Such a cost monoid can be defined as a quotient inductive type using the path preorder.
For instance, in \cref{sec:examples} we used the natural numbers equipped with the usual numerical ordering as the cost model $\costty$.
This synthetic preorder $\costty \eqdef \omega$ may be defined
by means of a quotient inductive type computed within $\UU_{\RxP}$, shown in \cref{fig:omega}.
The type $\omega$ is algorithmic because $\II$ is algorithmic.

\begin{figure}
  \iblock{
    \mhang{\Kwd{data}~\omega : \UU_\RxP ~\Kwd{where}}{
      \mrow{\mathsf{zero} : \omega}
      \mrow{\mathsf{suc} : \omega \to \omega}
      \mrow{\_ : \prn{n : \omega} \to n \pathrel{\omega} \mathsf{suc}\,n}
    }
  }
  \Description{Quotient inductive type defining cost structure $\omega$.}
  \caption{%
    Quotient inductive type defining cost structure $\omega$.
  }
  \label{fig:omega}
\end{figure}

\subsubsection{Computation Types}\label{sec:monads}

Let $M : \UU_\RxP \to \UU_\RxP$ be the free monad generated by some discretely-indexed finitary algebraic effect theory in $\UU_\RxP$.
We can adapt $M$ to support a cost effect using the \emph{cost monad transformer} corresponding to the writer monad $\mathbb{C}\times-$, defining a new monad $T$ on $\UU_\RxP$ by $TX \eqdef M\prn{\mathbb{C}\times X}$.
For pure code (as in \calf{}), let $M$ be the identity monad; for probabilistic choice, let $M$ be the \emph{free convex space} monad (axiomatized in \cref{fig:effect-probability}).
We interpret the universe of computation types $\tpc$ by the type of $T$-algebras.
Then $\tmc{A}$ is interpreted the same as $\tmv{\U{A}}$. Thus, we have a free-forgetful adjunction $\F\dashv \U \colon \tpc\to \tpv$ interpreting the call-by-push-value adjunctive structure of \decalf{}.

\subsubsection{Inequality Relation}
For any type $X:\tpv$ and pair of elements $x,x':\tmv{X}$, the inequality judgment $\leVal{x}{x'}{X}$ is interpreted by the path relation $x\pathrel{\tmv{X}} x'$, which is transitive because $\tpv \eqdef \UU_\RxP$.
The path relation $x\pathrel{\tmv{X}} x'$ is a proposition (and so a synthetic preorder) and may thus be internalized as a type of \decalf{}.

\subsection{Soundness and Completeness}\label{sec:semantics:sound-complete}

In summary, to interpret the inequality of \decalf{}, it suffices to work in an extensional dependent type theory with a directed interval $(\II, \irel, 0, 1)$.
Such a type theory may itself be interpreted in a \emph{QWI-topos} $\ECat$ equipped with such an interval; a QWI-topos is a cartesian closed category with finite limits, a subobject classifier, and \emph{QWI-types}, a form of quotient inductive types~\citep{fiore-pitts-steenkamp:2021}.

\begin{definition}\label{def:spt-model}
  We say that a QWI-topos $\ECat$ is a \emph{model of synthetic preorder theory} when it comes equipped with a directed interval.
\end{definition}

\subsubsection{Well-Adapted Models}

\NewDocumentCommand\Emb{}{N}

A priori, a cost bound in the synthetic theory has little to do with an \emph{actual} cost bound in concrete preorders. When a model satisfies the following \emph{well-adaptedness} property, however, we may relate a synthetic cost bound to a traditional cost bound in concrete preorders.
\begin{definition}\label{def:well-adapted-model}
  Let $\ECat$ be a model of synthetic preorder theory.
  We say that $\ECat$ is \emph{well-adapted} when there is a fully faithful cartesian closed functor $\Emb \colon \PREORD \hookrightarrow \ECat$ that preserves the interval object. This means that $\Emb \brc{0 \le 1} = \II$ and the points $0,1 : \II$ are determined by $0,1 \colon \brc{\ast} \to \brc{0 \le 1}$, respectively.
\end{definition}
Now, we discuss how to relate synthetic cost bounds to cost bounds in concrete preorders in well-adapted models.

\subsubsection{Completeness}

Let $\ECat$ be a well-adapted model of synthetic preorder theory.
The associated embedding \emph{preserves} concrete inequalities in a preorder as synthetic inequalities:

\begin{restatable}{theorem}{ThmPreorderToSyntheticPreorder}\label{thm:preorder-to-synthetic-preorder}
  Let $P : \PREORD$ be a concrete preorder. If $p \le_P p'$, then $\Emb p \pathrel{\Emb P} \Emb p'$ holds in $\ECat$.
\end{restatable}
\begin{proof}
  Unfolding the definition of the path relation, this means we need to define a path $\II \to \Emb P$ whose endpoints are determined by $\Emb p$ and $\Emb p'$ at $0$ and $1$, respectively. Using the fact that $\Emb $ is full and faithful and preserves the interval object, it suffices to define a map $\alpha : \brc{0 \le 1} \to P$, which we may define by $\alpha(0) \eqdef p$ and $\alpha(1) \eqdef p'$ since we have assumed that $p \le_P p'$. To check that the map so defined has the correct endpoints, we compute along the boundary, using the fact that $\Emb\brc{0 \le 1} \cong \II$:
  \[
  \begin{tikzpicture}[diagram]
    \node (A) {1};
    \node (E) [right = 1em of A] {$\cong$};
    \node (B) [right = 1.5em of E] {$\Emb\brc{\ast}$};
    \node (C) [right = 6em of B] {$\Emb\brc{0 \le 1}$};
    \node (D) [right = 5em of C] {$\Emb P$};
    \draw[->, bend left] (B) to node[above] {$\Emb 0$} (C);
    \draw[->, bend right] (B) to node[below] {$\Emb 1$} (C);
    \draw[->] (C) to node[above] {$\Emb \alpha$} (D);
  \end{tikzpicture}
  \]
  The top composite computes to $\Emb (\alpha(0)) = \Emb (p)$ and the bottom $\Emb (\alpha(1)) = \Emb (p')$, as desired.
\end{proof}

\begin{corollary}[Completeness of synthetic cost bounds]
  Let $P, Q : \PREORD$ be concrete preorders. Given monotone functions $f, f' \colon P \to Q$ such that $f \le f'$ on the pointwise order, then the synthetic preorder relation $\Emb f \pathrel{\Emb P \to \Emb Q} \Emb f'$ holds in $\ECat$.
\end{corollary}

\subsubsection{Soundness}

Let $\ECat$ be a well-adapted model of synthetic preorder theory.
The associated embedding \emph{reflects} synthetic inequalities as concrete inequalities in a preorder:

\begin{restatable}{theorem}{ThmSyntheticPreorderToPreorder}\label{thm:synthetic-preorder-to-preorder}
  Let $P : \PREORD$ be a concrete preorder. If $x \pathrel{\Emb P} x'$ holds in $\ECat$, then there exist $p, p' : P$ such that $p \le_P p'$ and $x = \Emb p$ and $x' = \Emb p'$.
\end{restatable}

\begin{proof}
  Unfolding the definition of the path relation, we have a path ${\alpha : \II \to \Emb P}$ from $x$ to $x'$. Because $\Emb $ is fully faithful and preserves the interval, this path determines a unique monotone map $f:\brc{0 \le 1} \to P$. Taking $p$ and $p'$ to be the endpoints of this map, we check that they are sent by the embedding to $x = \alpha(0)$ and $x' = \alpha(1)$, respectively:
  \[
  \begin{tikzpicture}[diagram]
    \node (A) {1};
    \node (E) [right = 1em of A] {$\cong$};
    \node (B) [right = 1.5em of E] {$\Emb \brc{\ast}$};
    \node (C) [right = 6em of B] {$\Emb \brc{0 \le 1}$};
    \node (D) [right = 5em of C] {$\Emb P$};
    \draw[->, bend left] (B) to node[above] {$\Emb 0$} (C);
    \draw[->, bend right] (B) to node[below] {$\Emb 1$} (C);
    \draw[->] (C) to node[above] {$\Emb f$} (D);
  \end{tikzpicture}
  \]
  The top composite computes to $\alpha(0) = \Emb (f(0)) = \Emb (p)$ and the bottom $\alpha(1) = \Emb (f(1)) = \Emb (p')$, as desired.
\end{proof}

\begin{corollary}[Soundness of synthetic cost bounds]
  Let $P, Q : \PREORD$ be concrete preorders. Given maps $x, x' \colon \Emb P \to \Emb Q$ such that the synthetic preorder relation $x \pathrel{\Emb P \to \Emb Q} x'$ holds in $\ECat$, then there exist $f, f' \colon P \to Q$ such that $f \le f'$ in the pointwise order and $x = \Emb f$ and $x' = \Emb f'$.
\end{corollary}

\subsection{A Model in Augmented Simplicial Sets}\label{sec:model-aug}

The canonical well-adapted model of synthetic preorder theory is (augmented) simplicial sets, for which the corresponding embedding is given by the nerve functor.

\subsubsection{Simplicial Sets for Synthetic Preorders}

To model synthetic preorders in a QWI-topos, we take a cue from higher category theory and consider \emph{simplicial sets}, which are presheaves on the simplex category defined below.

\begin{definition}[Simplex category]
  We will write $\SPX$ for the \emph{simplex category}, \ie the category of inhabited finite ordinals $\brk{n}$ and order-preserving maps between them. By convention, $\brk{0}$ will denote the singleton ordinal, \ie the  terminal object of $\SPX$.
\end{definition}

\paragraph{What do simplicial sets have to do with preorders?}
Every preorder can be reconstructed by gluing simplices together in a canonical
way; this is the \emph{density} of the embedding $I\colon
  \SPX\hookrightarrow\PREORD$, which implies that the corresponding nerve functor
\begin{align*}
  &\Emb \colon \PREORD \to \Psh{\SPX} \\
  &\Emb P \eqdef \Hom{\PREORD}{I - }{P}
\end{align*}
is fully faithful, \ie $\Psh{\SPX}$ is well-adapted. The synthetic
preorder theory of simplicial sets, then, studies sufficient conditions in the
internal language for arbitrary simplicial sets to ``behave like'' those that
arise from actual preorders via the nerve functor $\Emb$.

\begin{restatable}{theorem}{ThmSimplicialSetsModel}\label{thm:simplicial-sets-model}
  The category $\Psh{\SPX}$ of simplicial sets forms a model of \emph{synthetic preorder theory} in which the interval is given by the representable presheaf $\II\eqdef\Yo\brk{1}$ and its two global points.
\end{restatable}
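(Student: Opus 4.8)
The plan is to unwind \cref{def:spt-model} and check that $\Psh{\SPX}$, suitably equipped, is an elementary QWI-topos (\cref{def:QWI-topos}) satisfying \cref{ax:interval} and \cref{ax:phoa-principle}. The QWI-topos structure is immediate: $\Psh{\SPX}$ is a presheaf topos, hence cartesian closed with a subobject classifier, carries a natural numbers object and a Hofmann--Streicher universe hierarchy, and satisfies the indexed WISC axiom, as recorded in the example following \cref{def:QWI-topos}. For \cref{ax:interval} I take $\II \coloneqq \Yo\brk{1}$ and let $0,1 : \mathbf{1}\to\II$ be the global points obtained via Yoneda from the two order-preserving maps $\brk{0}\to\brk{1}$ in $\SPX$. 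Since these two maps of $\SPX$ are distinct, $0\neq 1$ as global elements, so in particular $\II\not\cong\mathbf{1}$; this is what is meant by non-triviality (the topos is not the terminal category).

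Everything then rests on \cref{ax:phoa-principle}: the evaluation map $\II^\II\to\II\times\II$, $\alpha\mapsto(\alpha 0,\alpha 1)$, must factor through an isomorphism $\II^\II\cong\II_2$. I would prove this by computing both presheaves level-wise. For $\II^\II$: by cartesian closure and Yoneda, $(\II^\II)_n \cong \Hom{\Psh{\SPX}}{\Yo\brk{n}\times\II}{\II}$, and the product of representables $\Yo\brk{n}\times\Yo\brk{1}$ has the same simplices as the nerve $N(\brk{n}\times\brk{1})$ of the product poset (both classify pairs of monotone maps out of a given $\brk{k}$), so the full faithfulness of the nerve functor $N:\PREORD\to\Psh{\SPX}$ recalled above gives
\[
  (\II^\II)_n \;\cong\; \Hom{\PREORD}{\brk{n}\times\brk{1}}{\brk{1}}.
\]
A monotone map $\brk{n}\times\brk{1}\to\brk{1}$ is the same as an up-set of the prism poset $\brk{n}\times\brk{1}$, which by monotonicity across its two levels amounts to a pair of thresholds $0\le a_1\le a_0\le n+1$. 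For $\II_2 = \Compr{i,j:\II}{i\irel j}$: identifying $\II_n = \Hom{\SPX}{\brk{n}}{\brk{1}}$ with thresholds, the internal relation $i\irel j$ unwinds at each stage to the condition that the threshold of $i$ is at least that of $j$, so $(\II_2)_n$ is likewise this set of pairs. Finally, tracing through the exponential transpose, the evaluation map sends an up-set of the prism to its restrictions to the two levels, i.e. to the pair of thresholds $(a_0,a_1)$ with $a_0\ge a_1$; hence it lands in $\II_2$ and the corestriction $\II^\II\to\II_2$ is the evident level-wise bijection, natural in $\brk{n}$. This establishes the Phoa principle and completes the proof.

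The only real obstacle is the bookkeeping in this last computation: one must pin down the $n$-simplices of $\II^\II$ via the exponential transpose and the prism decomposition, check that the internal relation $\irel$ defining $\II_2$ really is the reversed-threshold order at every stage, and verify that the evaluation map corresponds under these identifications to merely forgetting that an up-set of the prism is constrained ``between'' its two boundary thresholds. None of these steps is deep, but each requires care to get the variances and the $n+1$-versus-$n$ indexing right; once they are in place the isomorphism $\II^\II\cong\II_2$ is essentially an identity on threshold pairs. Every remaining ingredient---the QWI-topos axioms, the two points of the interval, non-triviality---is either cited as standard or immediate.
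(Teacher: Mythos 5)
Your proposal is correct and follows essentially the same route as the paper: the paper verifies the Phoa principle by checking that $\brk{1}^{\brk{1}}\cong\II_2$ holds in $\PREORD$ and transferring along the fully faithful nerve $N:\PREORD\hookrightarrow\Psh{\SPX}$, and your level-wise computation of $(\II^\II)_n\cong\Hom{\PREORD}{\brk{n}\times\brk{1}}{\brk{1}}$ via $\Yo\brk{n}\times\Yo\brk{1}\cong N(\brk{n}\times\brk{1})$ is precisely an unpacking of that transfer (it amounts to showing $N$ preserves this exponential). You additionally spell out the threshold bookkeeping and the routine QWI-topos, interval, and non-triviality checks that the paper's proof leaves implicit; all of these are handled correctly.
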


\subsubsection{Augmented Simplicial Sets for a Nontrivial Behavioral Phase}

Unfortunately, in $\Psh{\SPX}$, the behavioral phase $\ExtOpn\eqdef 0 =_\II 1$ is the false proposition, meaning that all behavioral facts hold vacuously.
Our goal is to find a \emph{nontrivial} model $\ECat$, \ie where the slice $\ECat/\ExtOpn$ is not the terminal category. The most canonical choice for such a topos is obtained by \emph{freely extending} $\Psh{\SPX}$ with a maximal topos-theoretic point by forming an ``inverted Sierpi\'nski cone'', \ie the Artin gluing~\citep{sga:4} of the constant presheaves functor $\SET\to\Psh{\SPX}$. This gluing can also be presented equivalently by presheaves on a different category, as adding a maximal point to a presheaf topos corresponds (dually) to freely extending the base category by an initial object --- which amounts in the case of $\SPX$ to the use of \emph{augmented simplicial sets}.

\begin{definition}[Augmented simplex category]
  We will write $\AugSPX$ for the \emph{augmented simplex category}, the free extension of $\SPX$ by an initial object $\brk{-1}$. Concretely, $\AugSPX$ can be thought of as the category of \emph{arbitrary} finite ordinals and order-preserving maps between them, where $\brk{-1}$ corresponds to the empty ordinal.
\end{definition}

\begin{restatable}{theorem}{ThmAugSimplicialSetsModel}\label{thm:aug-simplicial-sets-model}
  The category $\Psh{\AugSPX}$ of \emph{augmented simplicial sets} forms a nontrivial well-adapted model of synthetic preorder theory in which
  the interval is given by the representable presheaf $\II\eqdef \Yo\brk{1}$.
  The phase distinction $\ExtOpn \eqdef 0 =_\II 1$ is isomorphic to the representable subterminal presheaf $\Yo\brk{-1}$.
\end{restatable}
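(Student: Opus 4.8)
The plan is to check, in turn, that $\Psh{\AugSPX}$ is (i) an elementary QWI-topos, (ii) a model of synthetic partially discrete preorder theory with $\II \eqdef \Yo\brk{1}$ and $\ExtOpn \eqdef \Yo\brk{-1}$, (iii) well-adapted, and (iv) non-trivial. Item (i) is immediate, since $\AugSPX$ is a small category and every presheaf topos is a QWI-topos by the Example following \cref{def:QWI-topos}. The single observation that drives everything else is that $\brk{-1}$ is \emph{initial} in $\AugSPX$: hence $\Yo\brk{-1}(\brk{n}) = \Hom{\AugSPX}{\brk{n}}{\brk{-1}}$ is empty for $n \ge 0$ and a singleton for $n = -1$, so $\Yo\brk{-1}$ is a subterminal presheaf; and since presheaf products are computed pointwise, $\Yo\brk{-1}\times\Yo\brk{n}\cong\Yo\brk{-1}$ for every $n$, whence $\hom(\Yo\brk{-1}\times\Yo\brk{n},F)\cong F(\brk{-1})$ for any presheaf $F$.

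For the interval and phase axioms, take $\II \eqdef \Yo\brk{1}$ with $0,1 : \mathbf{1}\to\II$ the images of the two monos $\brk{0}\to\brk{1}$ in $\SPX$, which gives \cref{ax:interval}; take $\ExtOpn \eqdef \Yo\brk{-1}$, which is a proposition by the observation above, giving \cref{ax:phase}. To establish \cref{ax:partial-discreteness}, we must show $\ExtOpn\to(\II\cong\mathbf{1})$; unwinding the open modality $\Op$, this is exactly the assertion that the projection $\Yo\brk{-1}\times\Yo\brk{1}\to\Yo\brk{-1}$ is an isomorphism, which is the case $n=1$ of $\Yo\brk{-1}\times\Yo\brk{n}\cong\Yo\brk{-1}$. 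Conceptually, the extensional-phase slice $\Psh{\AugSPX}/\ExtOpn$ is equivalent to $\SET$ --- the ``maximal point'' freely adjoined by the Artin gluing of the constant-presheaves functor $\SET\to\Psh{\SPX}$, which the excerpt records as the reason $\Psh{\AugSPX}$ arises --- and in $\SET$ the interval inevitably degenerates to $\mathbf{1}$.

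The crux is the Phoa principle (\cref{ax:phoa-principle}). One could try to transport it along the gluing decomposition using \cref{thm:simplicial-sets-model}, but this is delicate because the closed-subtopos inverse image $\Psh{\AugSPX}\to\Psh{\SPX}$ need not preserve exponentials; so instead I would verify $\II^\II\cong\II_2$ directly at every representable, which suffices since a map of presheaves is an isomorphism iff it is one pointwise. Writing $N : \PREORD \to \Psh{\AugSPX}$ for the nerve $N(P)(\brk{n}) = \Hom{\PREORD}{\brk{n}}{P}$ induced by the full embedding $\AugSPX\hookrightarrow\PREORD$, this functor is fully faithful (the degree-$(-1)$ component of any natural transformation is forced, since $N(P)(\brk{-1})$ is a singleton) and product-preserving (being a right adjoint), and $\II = N\brk{1}$, $\Yo\brk{n} = N\brk{n}$. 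Hence $\II^\II(\brk{n}) \cong \hom(N\brk{n}\times N\brk{1}, N\brk{1}) \cong \Hom{\PREORD}{\brk{n}\times\brk{1}}{\brk{1}}$. A monotone map out of the ``ladder'' $\brk{n}\times\brk{1}$ is precisely a pair $(g_0,g_1)$ of monotone maps $\brk{n}\to\brk{1}$ with $g_0 \le g_1$ pointwise, which is exactly an element of $\II_2(\brk{n})$; this bijection is natural in $\brk{n}$ and, tracing through the identifications, is the endpoint-evaluation map $\alpha\mapsto(\alpha 0, \alpha 1)$. The degree-$(-1)$ case degenerates to $\Hom{\PREORD}{\emptyset}{\brk{1}} = \{{*}\} = \II_2(\brk{-1})$, consistent. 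This gives $\II^\II\cong\II_2$ over evaluation, i.e.\ the Phoa principle, so $\Psh{\AugSPX}$ is a model of synthetic partially discrete preorder theory.

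Finally, for well-adaptedness, the same nerve $N : \PREORD\hookrightarrow\Psh{\AugSPX}$ is the required fully faithful embedding, and its image lies in the $\ExtOpn$-connected types because $\Op(N(P))\cong N(P)(\brk{-1}) = \Hom{\PREORD}{\emptyset}{P} = \mathbf{1}$; non-triviality follows since $\Psh{\AugSPX}/\ExtOpn\simeq\SET$ is not the terminal category. I expect the main obstacle to be making the Phoa step airtight --- in particular confirming that the exponential $\II^\II$ is genuinely computed by $\Hom{\PREORD}{\brk{n}\times\brk{1}}{\brk{1}}$ (via full faithfulness and product-preservation of $N$ together with $\II\times\Yo\brk{n}\cong N(\brk{n}\times\brk{1})$) and that the resulting bijection onto $\II_2$ really is the evaluation map rather than merely an abstract isomorphism; the remaining verifications are routine bookkeeping around the initial object $\brk{-1}$.
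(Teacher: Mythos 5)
Your proposal is correct and follows essentially the same route as the paper: both hinge on the nerve $N_\bot : \PREORD\hookrightarrow\Psh{\AugSPX}$ (fully faithful, limit-preserving, restricting to Yoneda on $\AugSPX$), reduce partial discreteness of $\II$ to the initiality of $\brk{-1}$, and reduce the Phoa principle to the fact that $\brk{1}$ satisfies it in $\PREORD$. The only cosmetic difference is that the paper invokes cartesian-closedness of $N_\bot$ to transport $\brk{1}^{\brk{1}}\cong\brk{1}_2$ wholesale, whereas you compute $\II^\II$ pointwise at representables via Yoneda, full faithfulness, and product preservation---a slightly more explicit rendering of the same argument.
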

Thus, for the computational effects considered in this work, we have a nontrivial model of the \decalf{} theory in $\Psh{\AugSPX}$ that is sound and complete with respect to an embedding of the category of preorders and monotone functions.
\section{Conclusion}\label{sec:conclusion}

In this work, we presented \decalf{}, an inequational extension of \calf{}~\citep{niu-sterling-grodin-harper:2022} that supports precise and approximate bounds on the cost and effect structure of programs.
Ab initio, the theory of \decalf{} has been forged and guided by the pragmatic struggles encountered in cost analysis and program verification.
Throughout the development, our guiding principle has been that a \emph{cost bound for an effectful program should be another effectful program}.

In \cref{sec:examples}, we demonstrated this methodology through a variety of case studies.
For pure, first-order algorithms, we were able to provide simple proofs of combined cost and correctness.
Such proofs in \decalf{} are more streamlined than their \calf{} counterparts and can be carried out without reference to any separable notion of recurrence.
Instead, the code itself serves the role of the recurrence, which we then solve for a closed form.
Then, using the behavioral modality, we were able to extract behavioral equalities from inequality-based program bounds.
This approach scaled naturally to support more complex classes of programs,
including higher-order programs with non-cost effects.

In \cref{sec:semantics}, we justified this style of reasoning using the notion
of a \emph{synthetic preorder theory}, a novel formulation of
an \emph{intrinsic} theory of preorders that smoothly integrates with the existing
cost--behavior phase distinction of \calf{}. To obtain a model of
this new theory, we draw inspiration from both work in directed type theory and
synthetic domain theory and characterize the synthetic preorders using simple
orthogonality conditions, which furnish a well-behaved subuniverse that
supports the structures for workaday program verification.

\begin{acks}
  We are grateful to Marcelo Fiore, Runming Li, and Parth Shastri for many insightful discussions.
  Additionally, we thank the anonymous reviewers for their thoughtful comments.

  This work was supported in part by \grantsponsor{AFOSR}{AFOSR}{https://www.afrl.af.mil/AFOSR/} (Tristan Nguyen, program manager) under grants \grantnum{AFOSR}{MURI FA9550-15-1-0053}, \grantnum{AFOSR}{FA9550-19-1-0216}, \grantnum{AFOSR}{FA9550-21-0009}, and \grantnum{AFOSR}{FA9550-23-1-0728} and in part by the \grantsponsor{NSF}{National Science Foundation}{https://nsf.gov} under award number \grantnum{NSF}{CCF-1901381}, and by \grantsponsor{AFRL}{AFRL}{https://afresearchlab.com/} through the \grantnum{AFRL}{NDSEG} fellowship.
  This work was co-funded by the European Union under the \grantsponsor{MSCA}{Marie Sk\l{}odowska-Curie Actions}{https://cordis.europa.eu/programme/id/HORIZON.1.2/en} Postdoctoral Fellowship grant agreement \grantnum[https://cordis.europa.eu/project/id/101065303]{MSCA}{101065303}.
  Any opinions, findings and conclusions or recommendations expressed in this material are those of the authors and do not necessarily reflect the views of the AFOSR, NSF, AFRL, the European Union, or the European Commission. Neither the European Union nor the granting authority can be held responsible for them.
\end{acks}

\bibliography{references/refs-bibtex,bib}


\begin{thebibliography}{25}


\ifx \showCODEN    \undefined \def \showCODEN     #1{\unskip}     \fi
\ifx \showDOI      \undefined \def \showDOI       #1{#1}\fi
\ifx \showISBNx    \undefined \def \showISBNx     #1{\unskip}     \fi
\ifx \showISBNxiii \undefined \def \showISBNxiii  #1{\unskip}     \fi
\ifx \showISSN     \undefined \def \showISSN      #1{\unskip}     \fi
\ifx \showLCCN     \undefined \def \showLCCN      #1{\unskip}     \fi
\ifx \shownote     \undefined \def \shownote      #1{#1}          \fi
\ifx \showarticletitle \undefined \def \showarticletitle #1{#1}   \fi
\ifx \showURL      \undefined \def \showURL       {\relax}        \fi
\providecommand\bibfield[2]{#2}
\providecommand\bibinfo[2]{#2}
\providecommand\natexlab[1]{#1}
\providecommand\showeprint[2][]{arXiv:#2}

\bibitem[Artin et~al\mbox{.}(1972)]%
        {sga:4}
\bibfield{author}{\bibinfo{person}{Michael Artin}, \bibinfo{person}{Alexander
  Grothendieck}, {and} \bibinfo{person}{Jean-Louis Verdier}.}
  \bibinfo{year}{1972}\natexlab{}.
\newblock \bibinfo{booktitle}{\emph{Th\'{e}orie des topos et cohomologie
  \'{e}tale des sch\'{e}mas}}. \bibinfo{series}{Lecture Notes in Mathematics},
  Vol.~\bibinfo{volume}{269, 270, 305}.
\newblock \bibinfo{publisher}{Springer-Verlag}, \bibinfo{address}{Berlin}.
\newblock
\newblock
\shownote{S\'{e}minaire de G\'{e}om\'{e}trie Alg\'{e}brique du Bois-Marie
  1963--1964 (SGA 4), Dirig\'{e} par M. Artin, A. Grothendieck, et J.-L.
  Verdier. Avec la collaboration de N. Bourbaki, P. Deligne et B. Saint-Donat}.


\bibitem[Christensen et~al\mbox{.}(2020)]%
        {christensen-opie-rijke-scoccola:2020}
\bibfield{author}{\bibinfo{person}{J.~Daniel Christensen},
  \bibinfo{person}{Morgan Opie}, \bibinfo{person}{Egbert Rijke}, {and}
  \bibinfo{person}{Luis Scoccola}.} \bibinfo{year}{2020}\natexlab{}.
\newblock \showarticletitle{Localization in Homotopy Type Theory}.
\newblock \bibinfo{journal}{\emph{Higher Structures}}  \bibinfo{volume}{4}
  (\bibinfo{date}{Feb.} \bibinfo{year}{2020}), \bibinfo{pages}{1--32}.
\newblock
Issue 1.
\urldef\tempurl%
\url{https://higher-structures.math.cas.cz/api/files/issues/Vol4Iss1/ChrOpiRijSco}
\showURL{%
\tempurl}


\bibitem[Danner et~al\mbox{.}(2015)]%
        {danner-licata-ramyaa:2015}
\bibfield{author}{\bibinfo{person}{Norman Danner}, \bibinfo{person}{Daniel~R.
  Licata}, {and} \bibinfo{person}{Ramyaa}.} \bibinfo{year}{2015}\natexlab{}.
\newblock \showarticletitle{Denotational cost semantics for functional
  languages with inductive types}. In \bibinfo{booktitle}{\emph{Proceedings of
  the 20th {ACM} {SIGPLAN} International Conference on Functional Programming,
  {ICFP} 2015, Vancouver, BC, Canada, September 1-3, 2015}},
  \bibfield{editor}{\bibinfo{person}{Kathleen Fisher} {and}
  \bibinfo{person}{John~H. Reppy}} (Eds.). \bibinfo{publisher}{Association for
  Computing Machinery}, \bibinfo{pages}{140--151}.
\newblock
\urldef\tempurl%
\url{https://doi.org/10.1145/2784731.2784749}
\showDOI{\tempurl}


\bibitem[Fiore(1997)]%
        {fiore:1997}
\bibfield{author}{\bibinfo{person}{Marcelo~P. Fiore}.}
  \bibinfo{year}{1997}\natexlab{}.
\newblock \showarticletitle{An Enrichment Theorem for an Axiomatisation of
  Categories of Domains and Continuous Functions}.
\newblock \bibinfo{journal}{\emph{Mathematical Structures in Computer Science}}
  \bibinfo{volume}{7}, \bibinfo{number}{5} (\bibinfo{date}{Oct.}
  \bibinfo{year}{1997}), \bibinfo{pages}{591--618}.
\newblock
\showISSN{0960-1295}
\urldef\tempurl%
\url{https://doi.org/10.1017/S0960129597002429}
\showDOI{\tempurl}


\bibitem[Fiore et~al\mbox{.}(2022)]%
        {fiore-pitts-steenkamp:2021}
\bibfield{author}{\bibinfo{person}{Marcelo~P. Fiore},
  \bibinfo{person}{Andrew~M. Pitts}, {and} \bibinfo{person}{S.~C. Steenkamp}.}
  \bibinfo{year}{2022}\natexlab{}.
\newblock \showarticletitle{{Quotients, inductive types, and quotient inductive
  types}}.
\newblock \bibinfo{journal}{\emph{Logical Methods in Computer Science}}
  \bibinfo{volume}{{Volume 18, Issue 2}} (\bibinfo{date}{June}
  \bibinfo{year}{2022}).
\newblock
\urldef\tempurl%
\url{https://doi.org/10.46298/lmcs-18(2:15)2022}
\showDOI{\tempurl}


\bibitem[Fiore and Rosolini(1997)]%
        {fiore-rosolini:1997:cpos}
\bibfield{author}{\bibinfo{person}{Marcelo~P. Fiore} {and}
  \bibinfo{person}{Giuseppe Rosolini}.} \bibinfo{year}{1997}\natexlab{}.
\newblock \showarticletitle{The category of cpos from a synthetic viewpoint}.
  In \bibinfo{booktitle}{\emph{Thirteenth Annual Conference on Mathematical
  Foundations of Progamming Semantics, {MFPS} 1997, Carnegie Mellon University,
  Pittsburgh, PA, USA, March 23-26, 1997}} \emph{(\bibinfo{series}{Electronic
  Notes in Theoretical Computer Science}, Vol.~\bibinfo{volume}{6})},
  \bibfield{editor}{\bibinfo{person}{Stephen~D. Brookes} {and}
  \bibinfo{person}{Michael~W. Mislove}} (Eds.). \bibinfo{publisher}{Elsevier},
  \bibinfo{pages}{133--150}.
\newblock
\urldef\tempurl%
\url{https://doi.org/10.1016/S1571-0661(05)80165-3}
\showDOI{\tempurl}


\bibitem[Fiore and Rosolini(2001)]%
        {fiore-rosolini:2001}
\bibfield{author}{\bibinfo{person}{Marcelo~P. Fiore} {and}
  \bibinfo{person}{Giuseppe Rosolini}.} \bibinfo{year}{2001}\natexlab{}.
\newblock \showarticletitle{Domains in {H}}.
\newblock \bibinfo{journal}{\emph{Theoretical Computer Science}}
  \bibinfo{volume}{264}, \bibinfo{number}{2} (\bibinfo{date}{Aug.}
  \bibinfo{year}{2001}), \bibinfo{pages}{171--193}.
\newblock
\showISSN{0304-3975}
\urldef\tempurl%
\url{https://doi.org/10.1016/S0304-3975(00)00221-8}
\showDOI{\tempurl}


\bibitem[Harper et~al\mbox{.}(1990)]%
        {harper-mitchell-moggi:1990}
\bibfield{author}{\bibinfo{person}{Robert Harper}, \bibinfo{person}{John~C.
  Mitchell}, {and} \bibinfo{person}{Eugenio Moggi}.}
  \bibinfo{year}{1990}\natexlab{}.
\newblock \showarticletitle{Higher-Order Modules and the Phase Distinction}. In
  \bibinfo{booktitle}{\emph{Proceedings of the 17th ACM SIGPLAN-SIGACT
  Symposium on Principles of Programming Languages}}.
  \bibinfo{publisher}{Association for Computing Machinery},
  \bibinfo{address}{San Francisco, California, USA}, \bibinfo{pages}{341--354}.
\newblock
\showISBNx{0-89791-343-4}
\urldef\tempurl%
\url{https://doi.org/10.1145/96709.96744}
\showDOI{\tempurl}


\bibitem[Hoare(1961)]%
        {hoare:1961}
\bibfield{author}{\bibinfo{person}{C.~A.~R. Hoare}.}
  \bibinfo{year}{1961}\natexlab{}.
\newblock \showarticletitle{Algorithm 64: {{Quicksort}}}.
\newblock \bibinfo{journal}{\emph{Commun. ACM}} \bibinfo{volume}{4},
  \bibinfo{number}{7} (\bibinfo{date}{July} \bibinfo{year}{1961}),
  \bibinfo{pages}{321}.
\newblock
\showISSN{0001-0782}
\urldef\tempurl%
\url{https://doi.org/10.1145/366622.366644}
\showDOI{\tempurl}


\bibitem[Hoare(1962)]%
        {hoare:1962}
\bibfield{author}{\bibinfo{person}{C.~A.~R. Hoare}.}
  \bibinfo{year}{1962}\natexlab{}.
\newblock \showarticletitle{Quicksort}.
\newblock \bibinfo{journal}{\emph{Comput. J.}} \bibinfo{volume}{5},
  \bibinfo{number}{1} (\bibinfo{date}{Jan.} \bibinfo{year}{1962}),
  \bibinfo{pages}{10--16}.
\newblock
\showISSN{0010-4620}
\urldef\tempurl%
\url{https://doi.org/10.1093/comjnl/5.1.10}
\showDOI{\tempurl}


\bibitem[Hyland(1991)]%
        {hyland:1991}
\bibfield{author}{\bibinfo{person}{J.~M.~E. Hyland}.}
  \bibinfo{year}{1991}\natexlab{}.
\newblock \showarticletitle{First steps in synthetic domain theory}. In
  \bibinfo{booktitle}{\emph{Category Theory}},
  \bibfield{editor}{\bibinfo{person}{Aurelio Carboni},
  \bibinfo{person}{Maria~Cristina Pedicchio}, {and} \bibinfo{person}{Guiseppe
  Rosolini}} (Eds.). \bibinfo{publisher}{Springer Berlin Heidelberg},
  \bibinfo{address}{Berlin, Heidelberg}, \bibinfo{pages}{131--156}.
\newblock
\showISBNx{978-3-540-46435-8}


\bibitem[Kaposi et~al\mbox{.}(2019)]%
        {kaposi-kovacs-altenkirch:2019}
\bibfield{author}{\bibinfo{person}{Ambrus Kaposi}, \bibinfo{person}{Andr\'{a}s
  Kov\'{a}cs}, {and} \bibinfo{person}{Thorsten Altenkirch}.}
  \bibinfo{year}{2019}\natexlab{}.
\newblock \showarticletitle{Constructing Quotient Inductive-inductive Types}.
\newblock \bibinfo{journal}{\emph{Proceedings of the ACM on Programming
  Languages}} \bibinfo{volume}{3}, \bibinfo{number}{POPL} (\bibinfo{date}{Jan.}
  \bibinfo{year}{2019}), \bibinfo{pages}{2:1--2:24}.
\newblock
\showISSN{2475-1421}
\urldef\tempurl%
\url{https://doi.org/10.1145/3290315}
\showDOI{\tempurl}


\bibitem[Kavvos et~al\mbox{.}(2019)]%
        {kavvos-morehouse-licata-danner:2019}
\bibfield{author}{\bibinfo{person}{G.~A. Kavvos}, \bibinfo{person}{Edward
  Morehouse}, \bibinfo{person}{Daniel~R. Licata}, {and} \bibinfo{person}{Norman
  Danner}.} \bibinfo{year}{2019}\natexlab{}.
\newblock \showarticletitle{Recurrence Extraction for Functional Programs
  through Call-by-Push-Value}.
\newblock \bibinfo{journal}{\emph{Proceedings of the ACM on Programming
  Languages}} \bibinfo{volume}{4}, \bibinfo{number}{POPL} (\bibinfo{date}{Dec.}
  \bibinfo{year}{2019}).
\newblock
\urldef\tempurl%
\url{https://doi.org/10.1145/3371083}
\showDOI{\tempurl}


\bibitem[Levy(2003)]%
        {levy:2003:book}
\bibfield{author}{\bibinfo{person}{Paul~Blain Levy}.}
  \bibinfo{year}{2003}\natexlab{}.
\newblock \bibinfo{booktitle}{\emph{Call-by-Push-Value: A Functional/Imperative
  Synthesis}}.
\newblock \bibinfo{publisher}{Kluwer, Semantic Structures in Computation, 2}.
\newblock
\showISBNx{1-4020-1730-8}


\bibitem[Licata and Harper(2011)]%
        {licata-harper:2011}
\bibfield{author}{\bibinfo{person}{Daniel~R. Licata} {and}
  \bibinfo{person}{Robert Harper}.} \bibinfo{year}{2011}\natexlab{}.
\newblock \showarticletitle{2-Dimensional Directed Type Theory}.
\newblock \bibinfo{journal}{\emph{Electronic Notes in Theoretical Computer
  Science}}  \bibinfo{volume}{276} (\bibinfo{year}{2011}),
  \bibinfo{pages}{263--289}.
\newblock
\showISSN{1571-0661}
\urldef\tempurl%
\url{https://doi.org/10.1016/j.entcs.2011.09.026}
\showDOI{\tempurl}
\newblock
\shownote{Twenty-seventh Conference on the Mathematical Foundations of
  Programming Semantics (MFPS XXVII)}.


\bibitem[Niu et~al\mbox{.}(2022)]%
        {niu-sterling-grodin-harper:2022}
\bibfield{author}{\bibinfo{person}{Yue Niu}, \bibinfo{person}{Jonathan
  Sterling}, \bibinfo{person}{Harrison Grodin}, {and} \bibinfo{person}{Robert
  Harper}.} \bibinfo{year}{2022}\natexlab{}.
\newblock \showarticletitle{A Cost-Aware Logical Framework}.
\newblock \bibinfo{journal}{\emph{Proceedings of the ACM on Programming
  Languages}} \bibinfo{volume}{6}, \bibinfo{number}{POPL} (\bibinfo{date}{Jan.}
  \bibinfo{year}{2022}).
\newblock
\urldef\tempurl%
\url{https://doi.org/10.1145/3498670}
\showDOI{\tempurl}
\showeprint[arXiv]{2107.04663}~[cs.PL]


\bibitem[Phoa(1991)]%
        {phoa:1991}
\bibfield{author}{\bibinfo{person}{Wesley Phoa}.}
  \bibinfo{year}{1991}\natexlab{}.
\newblock \emph{\bibinfo{title}{Domain Theory in Realizability Toposes}}.
\newblock \bibinfo{thesistype}{Ph.\,D. Dissertation}.
  \bibinfo{school}{University of Edinburgh}.
\newblock


\bibitem[Plotkin and Power(2002)]%
        {plotkin-power:2002}
\bibfield{author}{\bibinfo{person}{Gordon~D. Plotkin} {and}
  \bibinfo{person}{John Power}.} \bibinfo{year}{2002}\natexlab{}.
\newblock \showarticletitle{Notions of Computation Determine Monads}. In
  \bibinfo{booktitle}{\emph{Proceedings of the 5th International Conference on
  Foundations of Software Science and Computation Structures}}.
  \bibinfo{publisher}{Springer-Verlag}, \bibinfo{address}{Berlin, Heidelberg},
  \bibinfo{pages}{342--356}.
\newblock
\showISBNx{3-540-43366-X}


\bibitem[Riehl and Shulman(2017)]%
        {riehl-shulman:2017}
\bibfield{author}{\bibinfo{person}{Emily Riehl} {and} \bibinfo{person}{Michael
  Shulman}.} \bibinfo{year}{2017}\natexlab{}.
\newblock \showarticletitle{A type theory for synthetic $\infty$-categories}.
\newblock \bibinfo{journal}{\emph{Higher Structures}}  \bibinfo{volume}{1}
  (\bibinfo{year}{2017}), \bibinfo{pages}{147--224}.
\newblock
Issue 1.
\showeprint[arXiv]{1705.07442}~[math.CT]
\urldef\tempurl%
\url{https://journals.mq.edu.au/index.php/higher_structures/article/view/36}
\showURL{%
\tempurl}


\bibitem[Rijke(2019)]%
        {rijke:2019}
\bibfield{author}{\bibinfo{person}{Egbert Rijke}.}
  \bibinfo{year}{2019}\natexlab{}.
\newblock \emph{\bibinfo{title}{Classifying Types}}.
\newblock \bibinfo{thesistype}{Ph.\,D. Dissertation}. \bibinfo{school}{Carnegie
  Mellon University}.
\newblock
\showeprint[arXiv]{1906.09435}


\bibitem[Rijke et~al\mbox{.}(2020)]%
        {rijke-shulman-spitters:2020}
\bibfield{author}{\bibinfo{person}{Egbert Rijke}, \bibinfo{person}{Michael
  Shulman}, {and} \bibinfo{person}{Bas Spitters}.}
  \bibinfo{year}{2020}\natexlab{}.
\newblock \showarticletitle{Modalities in homotopy type theory}.
\newblock \bibinfo{journal}{\emph{Logical Methods in Computer Science}}
  \bibinfo{volume}{16} (\bibinfo{date}{Jan.} \bibinfo{year}{2020}),
  \bibinfo{pages}{2:1--2:79}.
\newblock
Issue 1.
\urldef\tempurl%
\url{https://doi.org/10.23638/LMCS-16(1:2)2020}
\showDOI{\tempurl}


\bibitem[Sterling(2021)]%
        {sterling:2021:thesis}
\bibfield{author}{\bibinfo{person}{Jonathan Sterling}.}
  \bibinfo{year}{2021}\natexlab{}.
\newblock \emph{\bibinfo{title}{First Steps in Synthetic {Tait} Computability:
  The Objective Metatheory of Cubical Type Theory}}.
\newblock \bibinfo{thesistype}{Ph.\,D. Dissertation}. \bibinfo{school}{Carnegie
  Mellon University}.
\newblock
\urldef\tempurl%
\url{https://doi.org/10.5281/zenodo.6990769}
\showDOI{\tempurl}
\newblock
\shownote{Version 1.1, revised May 2022}.


\bibitem[Sterling et~al\mbox{.}(2019)]%
        {sterling-angiuli-gratzer:2019}
\bibfield{author}{\bibinfo{person}{Jonathan Sterling}, \bibinfo{person}{Carlo
  Angiuli}, {and} \bibinfo{person}{Daniel Gratzer}.}
  \bibinfo{year}{2019}\natexlab{}.
\newblock \showarticletitle{Cubical Syntax for Reflection-Free Extensional
  Equality}. In \bibinfo{booktitle}{\emph{4th International Conference on
  Formal Structures for Computation and Deduction (FSCD 2019)}}
  \emph{(\bibinfo{series}{Leibniz International Proceedings in Informatics
  (LIPIcs)}, Vol.~\bibinfo{volume}{131})},
  \bibfield{editor}{\bibinfo{person}{Herman Geuvers}} (Ed.).
  \bibinfo{publisher}{Schloss Dagstuhl--Leibniz-Zentrum fuer Informatik},
  \bibinfo{address}{Dagstuhl, Germany}, \bibinfo{pages}{31:1--31:25}.
\newblock
\showISBNx{978-3-95977-107-8}
\showISSN{1868-8969}
\urldef\tempurl%
\url{https://doi.org/10.4230/LIPIcs.FSCD.2019.31}
\showDOI{\tempurl}
\showeprint[arXiv]{1904.08562}~[cs.LO]


\bibitem[Sterling et~al\mbox{.}(2022)]%
        {sterling-angiuli-gratzer:2022}
\bibfield{author}{\bibinfo{person}{Jonathan Sterling}, \bibinfo{person}{Carlo
  Angiuli}, {and} \bibinfo{person}{Daniel Gratzer}.}
  \bibinfo{year}{2022}\natexlab{}.
\newblock \showarticletitle{{A Cubical Language for Bishop Sets}}.
\newblock \bibinfo{journal}{\emph{Logical Methods in Computer Science}}
  \bibinfo{volume}{18} (\bibinfo{date}{March} \bibinfo{year}{2022}).
\newblock
Issue 1.
\urldef\tempurl%
\url{https://doi.org/10.46298/lmcs-18(1:43)2022}
\showDOI{\tempurl}
\showeprint[arXiv]{2003.01491}~[cs.LO]


\bibitem[Sterling and Harper(2021)]%
        {sterling-harper:2021}
\bibfield{author}{\bibinfo{person}{Jonathan Sterling} {and}
  \bibinfo{person}{Robert Harper}.} \bibinfo{year}{2021}\natexlab{}.
\newblock \showarticletitle{Logical Relations as Types: Proof-Relevant
  Parametricity for Program Modules}.
\newblock \bibinfo{journal}{\emph{J. ACM}} \bibinfo{volume}{68},
  \bibinfo{number}{6} (\bibinfo{date}{Oct.} \bibinfo{year}{2021}).
\newblock
\showISSN{0004-5411}
\urldef\tempurl%
\url{https://doi.org/10.1145/3474834}
\showDOI{\tempurl}
\showeprint[arXiv]{2010.08599}~[cs.PL]


\end{thebibliography}

\clearpage
\appendix
\section{Value Types in \decalf{}}\label{sec:decalf-def}

\begin{figure}[ht]
  \begin{align*}
    {+} &: \tpv \to \tpv \to \tpv\\
    \mathsf{inl} &: \tmv{X} \to \tmv{\sumty{X}{Y}}\\
    \mathsf{inr} &: \tmv{Y} \to \tmv{\sumty{X}{Y}}\\
    \mathsf{case} &: \impl{\isof{P}{\tmv{\sumty{X}{Y}} \to \jdg}} (\isof{s}{\tmv{\sumty{X}{Y}}}) \\
      &\quad \to ((\isof{x}{\tmv{X}}) \to P(\inl{x})) \\
      &\quad \to ((\isof{y}{\tmv{Y}}) \to P(\inr{y})) \\
      &\quad \to P(s) \\
    \mathsf{case/}\beta_1 &:
      \sumcase{\inl{x}}{P}{e_1}{e_2} = e_1(x)\\
    \mathsf{case/}\beta_2 &:
      \sumcase{\inr{y}}{P}{e_1}{e_2} = e_1(y)\\
    \mathsf{case/}\eta &:
      \sumcase{s}{P}{\lambda x.P(\inl{x})}{\lambda y.P(\inr{y})} = P(s)
  \end{align*}
  \caption{The sum type constructor in \decalf{}.}
\end{figure}

\begin{figure}[ht]
  \begin{align*}
    \mathsf{List} &: \tpv \to \tpv\\
    \nilex &: \tmv{\listty{X}}\\
    \consex{}{} &: \tmv{X} \to \tmv{\listty{X}} \to \tmv{\listty{X}}\\
    \mathsf{rec} &: \impl{\isof{P}{\tmv{\listty{X}} \to \jdg}} (\isof{l}{\tmv{\listty{X}}}) \\
      &\quad \to {P(\nilex)} \\
      &\quad \to ((\isof{x}{\tmv{X}})~(\isof{l}{\tmv{\listty{X}}}) \to {P(l)} \to {P(\consex{x}{l})}) \\
      &\quad \to {P(l)}\\
    \mathsf{rec/}\beta_\mathsf{\nilex} &: \listrec{\nilex}{P}{e_0}{e_1} = e_0\\
    \mathsf{rec/}\beta_\mathsf{\consex{}{}} &: \listrec{\consex{x}{l}}{P}{e_0}{e_1} = e_1(x)(l)(\listrec{l}{P}{e_0}{e_1})
  \end{align*}
  \caption{The list type constructor in \decalf{}.}
\end{figure}

\section{Extended Discussion of Semantics}\label{sec:extended-semantics}

\LemDiscretelyIndexedSums*
\begin{proof}
  Boundary separation is obviously preserved by dependent sum types, so we will consider only path-transitivity.
  We fix an orthogonal lifting scenario as follows:
  \[
    \begin{tikzpicture}[diagram]
      \node (nw) {$\II\lor \II$};
      \node[right = of nw] (ne) {$\sigJ{X}[x]{Yx}$};
      \node[below = 1.5cm of nw] (sw) {$\II_2$};
      \draw[->] (nw) to node[left] {$\tau$} (sw);
      \draw[->] (nw) to node[above] {$\psi$} (ne);
    \end{tikzpicture}
  \]
  We construct a unique lift $\hat{\psi} \colon \II_2\to \sigJ{X}[x]{Yx}$ with $\psi = \hat{\psi}\circ \tau$. As $X$ is discrete, it follows that the restriction $\Prj[1]\circ \psi \colon \II\lor \II \to X$ is constant on some element $x:X$, and so the restriction $\Prj[2]\circ \psi$ is a non-dependent function $\II\lor\II\to Y x$. Therefore, it suffices to solve the following simpler orthogonal lifting problem:
  \[
    \begin{tikzpicture}[diagram]
      \node (nw) {$\II\lor \II$};
      \node[right = of nw] (ne) {$Y x$};
      \node[below = 1.5cm of nw] (sw) {$\II_2$};
      \draw[->] (nw) to node[left] {$\tau$} (sw);
      \draw[->] (nw) to node[above] {$\psi\circ\Prj[2]$} (ne);
      \draw[->,exists] (sw) to node[sloped,below] {$\exists!$} (ne);
    \end{tikzpicture}
  \]
  The unique lift above exists because we have assumed each $Y x$ is path-transitive.
\end{proof}

\LemDiscreteAreSynthPreorder*
\begin{proof}
  This follows from \cref{lem:discretely-indexed-sums}, considering the dependent sum type $\sigJ{X}[\_]{\unit}$.
\end{proof}

\end{document}